	\newtheorem{assumption}{\textbf{Assumption}}
	\newtheorem{lemma}{\textbf{Lemma}}
	\newtheorem{definition}{\textbf{Definition}}
	\newtheorem{theorem}{\textbf{Theorem}}
	\newtheorem{remark}{\textbf{Remark}}
	\newtheorem{problem}{\textbf{Problem}}
\renewcommand\section{\@startsection {section}{1}{\z@}%
	{-3.5ex \@plus -1ex \@minus -.2ex}%
	{2.3ex \@plus.2ex}%
	{\normalfont\scshape\centering\bfseries}}
\renewcommand\subsection{\@startsection {subsection}{1}{\z@}%
	{-3.5ex \@plus -1ex \@minus -.2ex}%
	{2.3ex \@plus.2ex}%
	{\normalfont\itshape\bfseries}}
\newcommand{\T}{^{\mbox{\tiny T}}}
\newcommand{\R}{\mathbb{R}}
\let\leq\leqslant
\let\geq\geqslant
\newenvironment{proof}[1][Proof]%
{\par\noindent\textit{#1:\ }}%
{\hspace*{\fill} \rule{6pt}{6pt}}
\newenvironment{proof*}[1][Proof]%
{\par\noindent\textit{#1:\ }}{}
\DeclareMathOperator{\diag}{diag}
\newenvironment{system}[1]%
{\setlength{\arraycolsep}{0.5mm}\left\{ \; \begin{array}{#1}}%
	{\end{array} \right.}
\newenvironment{system*}[1]%
{\setlength{\arraycolsep}{0.5mm} \begin{array}{#1}}%
	{\end{array}}
\title{\LARGE \textbf{Scale-free Design for Delayed Regulated Synchronization of Homogeneous and Heterogeneous Discrete-time Multi-agent Systems Subject to Unknown Non-uniform and Arbitrarily Large Communication Delays}}
\author{Donya Nojavanzadeh, Zhenwei Liu, Ali Saberi and Anton A. Stoorvogel%
 \thanks{Donya Nojavanzadeh is with
	School of Electrical Engineering and Computer Science, Washington
	State University, Pullman, WA 99164, USA {\tt\small 
		donya.nojavanzadeh@wsu.edu}} 
	\thanks{Zhenwei Liu is with College of Information Science and
		Engineering, Northeastern University, Shenyang 110819,
		P. R. China {\tt\small jzlzwsy@gmail.com}} \thanks{Ali Saberi is with
		School of Electrical Engineering and Computer Science, Washington
		State University, Pullman, WA 99164, USA {\tt\small
			saberi@eecs.wsu.edu}} \thanks{Anton A. Stoorvogel is with
		Department of Electrical Engineering, Mathematics and Computer
		Science, University of Twente, P.O. Box 217, Enschede, The
		Netherlands {\tt\small A.A.Stoorvogel@utwente.nl}}}
\begin{document}
	
	\maketitle
	\thispagestyle{empty}
	\pagestyle{empty}
	
	\begin{abstract}
In this paper, we study delayed regulated state/output synchronization for discrete-time homogeneous and heterogeneous networks of multi-agent systems (MAS) subject to unknown, non-uniform and arbitrarily large communication delays. A delay transformation is utilized to transform the original MAS to a new system without delayed states. The proposed \emph{scale-free} dynamic protocols are developed solely based on agent models and localized information exchange with neighbors such that we do not need any information about the communication networks and the number of agents.
	\end{abstract}

	\section{Introduction}

Cooperative control of multi-agent systems (MAS) such as synchronization, consensus, swarming, flocking, has become a hot topic among researchers because of its broad application in various areas such as biological systems, sensor networks, automotive vehicle control, robotic cooperation teams, and so on. See for example books \cite{ren-book,wu-book,kocarev-book} or the survey paper \cite{saber-murray3}.

State synchronization inherently requires homogeneous networks. Most works have focused on state synchronization where each agent has access to a linear combination of its own state relative to that of the neighboring agents, which is called full-state coupling; see \cite{saber-murray3,saber-murray,saber-murray2,ren-atkins,ren-beard-atkins,tuna1}. A more realistic scenario which is partial-state coupling (i.e. agents share part of their information over the network) is studied in \cite{tuna2,li-duan-chen-huang,pogromsky-santoboni-nijmeijer,tuna3}. On the other hand, for heterogeneous network it is more reasonable to consider output synchronization since the dimensions of states and their physical interpretation may be different. For heterogeneous MAS with non-introspective agents \footnote{Agents are said to be introspective when they have access to either exact or estimates of their states, otherwise they are called non-introspective \cite{grip-yang-saberi-stoorvogel-automatica}.}, it is well-known that one needs to regulate outputs of the agents to \emph{a priori} given trajectory generated by a so-called exosystem (see
\cite{wieland-sepulchre-allgower, grip-saberi-stoorvogel3}). Other works on synchronization of MAS with non-introspective agents can be found in the literature as  \cite{grip-yang-saberi-stoorvogel-automatica,grip-saberi-stoorvogel}. Most of the literature for heterogeneous MAS with introspective agents are based on modifying the agent dynamics via local feedback to achieve some form of homogeneity. There have been many results for synchronization of heterogeneous networks with introspective agents, see for instance \cite{kim-shim-seo,yang-saberi-stoorvogel-grip-journal,li-soh-xie-lewis-TAC2019,modares-lewis-kang-davoudi-TAC2018,qian-liu-feng-TAC2019,chen-auto2019}.

In real applications, networks may be
subject to delays. Time delays may afflict system performance or even
lead to instability. As discussed in
\cite{cao-yu-ren-chen}, two kinds of delays have been considered in the
literature: input delays and communication delays. Input delays encapsulate the
processing time to execute an input for each agent, whereas
communication delays can be considered as the time it takes to transmit
information from an agent to its destination. Many works have been focused on dealing with input
delays, specifically with the objective of deriving an upper bound on the input delays
such that agents can still achieve synchronization. See, for example
\cite{ferrari-trecate,lin-jia,tian-liu,saber-murray2,xiao-wang}. Some research has
been done for networks subject to communication delays. Fundamentally, there are two approaches in the literature for dealing with MAS subject to communication delays.
\begin{enumerate}
	\item Standard output synchronization subject to regulating output to a constant trajectory.
	\item Delayed state/output synchronization.
\end{enumerate}

Both of these approaches preserves diffusiveness of the couplings (i.e. ensuring the invariance of the consensus manifold). Also, the notion of the delayed output synchronization coincides with the standard output synchronization if it is required that output regulated to a constant trajectory. As such delayed synchronization can be viewed as the generalization of standard synchronization in the context of MAS subject to communication delay.

Majority of research on MAS subject to communication delay have been focused on achieving the standard output synchronization by regulating the output to constant trajectory \cite{ghabcheloo,klotz-obuz-kan,munz-papachristodoulou-allgower,munz-papachristodoulou-allgower2,tian-liu,xiao-wang-tac,zhang-saberi-stoorvogel-ejc,liang-com-delay,yao-com-delay,wang-com-delay}. It is worth noting that in all of the aforementioned papers, design of protocols require knowledge of the graph and size of the network. More recently, the notion of delayed synchronization was introduced in \cite{chopra-spong2} for MAS with passive agents and strongly connected and balanced graphs where it is assumed that there exists a unique path between any two distinct nodes. Then, the authors extended their results in \cite{chopra-tac,chopra-spong-cdc06} when they allowed multiple paths between two agents in strongly connected communication graphs. Although the synchronized trajectory in these papers is constant and standard definition of synchronization can be utilized, the authors motivation for utilizing delayed synchronization is exploring the possible existence of delayed-induced periodicity in synchronized trajectory of coupled systems. These solutions, provided they exist, can be valuable in several applications, for example \cite{pyragas,vicente}. It is worth to note that the protocol design in these papers do not need knowledge of the graph, since they are restricted to passive agents. An interesting line of research utilizing delayed synchronization formulation was introduced recently in \cite{Liu-Saberi-Stoorvogel-Li_delayed-con,Liu-Saberi-Stoorvogel-Li_delayed-dis}. These papers considered a \textbf{dynamic} synchronized trajectory (i.e. non constant synchronized trajectory). They designed protocols to achieve regulated delayed state/output synchronization in presence of communication delays where the communication graph was spanning tree. However, the protocol design required knowledge of the graph and size of the network.

In this paper, we extend our earlier results of delayed synchronization by developing \textbf{scale-free} framework utilizing localized information exchange for homogeneous and heterogeneous MAS subject to unknown non-uniform and arbitrarily large communication delays to achieve delayed regulated synchronization when the synchronized trajectory is a \emph{dynamic} signal generated by a so-called exosystem. The associated graphs to the communication networks are assumed to be directed spanning tree (i.e., they have one root node and the other non-root nodes have in-degree one). We achieve scale-free delayed regulated state synchronization for discrete-time homogeneous MAS with non-introspective agents, and scale-free delayed regulated output synchronization for discrete-time heterogeneous MAS with introspective agents. Our proposed design methodologies are scale-free, namely

\begin{itemize}
	\item The design is independent of information about the communication network such as the spectrum of the associated Laplacian matrix or size of the network.
	\item The collaborative protocols will work for any network with associated directed spanning tree, and can tolerate any unknown, non-uniform and arbitrarily large communication delays.
\end{itemize}


\subsection*{Notations and definitions}

Given a matrix $A\in \mathbb{R}^{n\times m}$, $A\T$ denotes its
conjugate transpose and $\|A\|$ is the induced 2-norm. Let $j$ indicate $\sqrt{-1}$. A square matrix
$A$ is said to be Schur stable if all its eigenvalues are in the closed unit disc. We denote by
$\diag\{A_1,\ldots, A_N \}$, a block-diagonal matrix with
$A_1,\ldots,A_N$ as the diagonal elements.  $I_n$ denotes the
$n$-dimensional identity matrix and $0_n$ denotes $n\times n$ zero
matrix; sometimes we drop the subscript if the dimension is clear from
the context. For $A\in \mathbb{C}^{n\times m}$ and $B\in \mathbb{C}^{p\times q}$, the Kronecker product of $A$ and $B$ is defined as
\[
A\otimes B=\begin{pmatrix}
a_{11}B& \hdots& a_{1m}B\\
\vdots&\vdots&\vdots\\
a_{n1}B&\hdots&a_{nm}B
\end{pmatrix}.
\]
The following property of the Kronecker product will be particularly useful
\[
(A\otimes B)(C\otimes D)=(AC)\otimes (BD).
\]

To describe the information flow among the agents we associate a \emph{weighted graph} $\mathcal{G}$ to the communication network. The weighted graph $\mathcal{G}$ is defined by a triple
$(\mathcal{V}, \mathcal{E}, \mathcal{A})$ where
$\mathcal{V}=\{1,\ldots, N\}$ is a node set, $\mathcal{E}$ is a set of
pairs of nodes indicating connections among nodes, and
$\mathcal{A}=[a_{ij}]\in \mathbb{R}^{N\times N}$ is the weighted adjacency matrix with non negative elements $a_{ij}$. Each pair in $\mathcal{E}$ is called an \emph{edge}, where
$a_{ij}>0$ denotes an edge $(j,i)\in \mathcal{E}$ from node $j$ to
node $i$ with weight $a_{ij}$. Moreover, $a_{ij}=0$ if there is no
edge from node $j$ to node $i$. We assume there are no self-loops,
i.e.\ we have $a_{ii}=0$. A \emph{path} from node $i_1$ to $i_k$ is a
sequence of nodes $\{i_1,\ldots, i_k\}$ such that
$(i_j, i_{j+1})\in \mathcal{E}$ for $j=1,\ldots, k-1$. A \emph{directed tree} is a subgraph (subset
of nodes and edges) in which every node has exactly one parent node except for one node, called the \emph{root}, which has no parent node. The \emph{root set} is the set of root nodes. A \emph{directed spanning tree} is a subgraph which is
a directed tree containing all the nodes of the original graph. If a directed spanning tree exists, the root has a directed path to every other node in the tree. For a weighted graph $\mathcal{G}$, the matrix
$L=[\ell_{ij}]$ with
\[
\ell_{ij}=
\begin{system}{cl}
\sum_{k=1}^{N} a_{ik}, & i=j,\\
-a_{ij}, & i\neq j,
\end{system}
\]
is called the \emph{Laplacian matrix} associated with the graph
$\mathcal{G}$. The Laplacian matrix $L$ has all its eigenvalues in the
closed right half plane and at least one eigenvalue at zero associated
with right eigenvector $\textbf{1}$ \cite{royle-godsil}. Moreover, if the graph contains a directed spanning tree, the Laplacian matrix $L$ has a single eigenvalue at the origin and all other eigenvalues are located in the open right-half complex plane \cite{ren-book}.
A matrix $D=\{d_{ij}\}_{N\times N}$ is called a row stochastic matrix if 
\begin{enumerate}
	\item $d_{ij}\ge0$ for any $i,j$,
	\item $\sum_{j=1}^{N}d_{ij}=1$ for $i=1,\hdots, N$.
\end{enumerate}
A row stochastic matrix $D$ has at least one eigenvalue at $1$ with right eigenvector $1$. $D$ can be associated with a graph $\mathcal{G}=(\mathcal{V}, \mathcal{E}, \mathcal{A})$. The number of nodes $N$ is the dimension of $D$ and an edge $(j,i)\in \mathcal{E}$ if $d_{ij}>0$.

\section{Homogeneous MAS with non-introspective agents}
 Consider a MAS consists of $N$ identical linear agents
\begin{equation}\label{agent}
  \begin{system}{ccl}
    {x}_i(k+1) &=& Ax_i(k) +B u_i(k)\\
    y_i (k)&=& Cx_i(k)
  \end{system}
\end{equation}
for $i\in \{1,\ldots,N\}$, where $x_i \in \R^{n}$, $y_i \in \R^{p}$, and $u_i \in \R^m$ are the state, output and the input of
agent $i$, respectively.

We make the following assumption on agent models.
\begin{assumption}\label{agentass} 
	All eigenvalues of $A$ are in the closed unit disc. Moreover, $(A,B)$ is stabilizable and $(A,C)$ is detectable.
\end{assumption}

The network provides agent $i$ with the following information
\begin{equation}\label{zeta}
\zeta_{i}(k)=\dfrac{1}{1+\sum_{j=1}^{N}a_{ij}}\sum_{j=1}^N a_{ij}(y_{i}(k)-y_{j}(k-\kappa_{ij})),
\end{equation}
where $\kappa_{ij} \in\mathbb{N}^{+}$ represents an unknown communication delay from agent $j$ to agent $i$. In the above
$a_{ij}\geq 0$. This communication topology of the network, presented in \eqref{zeta}, can be associated to a weighted graph $\mathcal{G}$ with each
node indicating an agent in the network and the weight of an edge is
given by the coefficient $a_{ij}$.  The communication
delay implies that it took $\kappa_{ij}$ seconds for
agent $j$ to transfer its state information to agent $i$.

Next we write $\zeta_i$ as
\begin{equation}\label{zeta-y}
\zeta_i(k)=\sum_{j=1}^N d_{ij}(y_i(k)-y_j(k-\kappa_{ij})),
\end{equation}
where $d_{ij}\geq 0$, and we choose
$d_{ii}=1-\sum_{j=1,j\neq i}^Nd_{ij}$ such that $\sum_{j=1}^Nd_{ij}=1$
with $i,j\in\{1,\ldots,N\}$. Note that $d_{ii}$ satisfies $d_{ii}>0$. The weight matrix $D=[d_{ij}]$ is then a so-called, row stochastic matrix. Let $D_{in}=\diag\{d_{in}(i)\}$ with
$d_{in}(i)=\sum_{j=1}^{N}a_{ij}$. Then the relationship between the
row stochastic matrix $D$ and the Laplacian matrix $L$ is
\begin{equation}\label{hodt-LD}
(I+D_{in})^{-1}L=I-D.
\end{equation}

We refer to \eqref{zeta} as \emph{partial-state coupling} since only part of
the states are communicated over the network. When $C=I$, it means all states are communicated over the network and we call it \emph{full-state coupling}. \\

We make the following definition. 
\begin{definition}\label{ungrN}
	Let $\mathbb{G}^N$ denote the set of
	directed spanning tree graphs with $N$ nodes for which the corresponding Laplacian matrix $L$ is
	lower triangular. The corresponding Laplacian matrix $L$ has the property that the entries of the first row are equal to zero and $\ell_{ii}>0$ for $i=2,\ldots,N$. We consider agent $1$ as the root agent. 
\end{definition}

\begin{remark}
	Note that any graph which is a directed spanning tree, has a possible lower triangular Laplacian matrix after reordering of the agents.
\end{remark}

	For the graph defined by Definition \ref{ungrN}, we have that row stochastic matrix $D$ is lower triangular matrix with $d_{11}=1$ and $d_{1j}=0$ for $j=2,\hdots, N$. Therefore, we have
\[
D=\begin{pmatrix}
1 & 0 & 0 & \cdots & 0 \\
d_{21} & d_{22} & 0 & \cdots & 0 \\
d_{31} & d_{32} & d_{33} & \ddots & \vdots \\
\vdots & \ddots & \ddots & \ddots & 0\\
d_{N1} & \cdots & d_{N,N-2} & d_{N,N-1} & d_{N,N}
\end{pmatrix}.
\]
Since the graph is equal to a directed spanning tree, in
every row (except the first one) there are exactly two elements unequal to $0$. 

Our goal is to achieve delayed regulated state synchronization among all agents while the synchronized dynamics are equal to a priori given trajectory generated by a so-called exosystem
\begin{equation}\label{exo}
\begin{system*}{cl}
{x}_r(k+1)&=Ax_r(k), \quad x_r(0)=x_{r0}\\
y_r(k)&=Cx_r(k)
\end{system*}
\end{equation}
where $x_r\in \mathbb{R}^n$ and $y_r\in\mathbb{R}^p$. 

Clearly, we need some level of communication between the constant trajectory and the agents. According to the structure of communication network, we assume that each agent has access to the quantity
\begin{equation}
\psi_i=\iota_i(y_i(k)- y_r(k-\kappa_{ir})), \qquad \iota_i=\begin{system}{cl}
1, &\quad i=1,\\
0, &\quad i=2,\cdots,N.
\end{system}
\end{equation}
	By combining this with \eqref{zeta-y}, the information exchange among agents is given by

\begin{equation}\label{zetabar}
\bar{\zeta}_i(k)=\sum_{j=1}^{N}a_{ij}(y_i(k)-y_j(k-\kappa_{ij}))+\iota_i(y_i(k)-y_r(k-\kappa_{ir})).
\end{equation}

For any graph $\mathbb{G}^N$, with the Laplacian matrix $L$, we define the expanded Laplacian matrix as: 
\[
\bar{L}=L+diag\{\iota_i\}=[\bar{\ell}_{ij}]_{N \times N}
\]
which is not a regular Laplacian matrix associated to the graph, since the sum of its rows need not be zero.
In terms of the coefficients of the expanded Laplacian matrix $\bar{L}$, $\bar{\zeta}_i$ in \eqref{zetabar} can be rewritten as:
\begin{equation}\label{zetabar2}
\bar{\zeta}_i(k)=\frac{1}{2+d_{in}(i)}\sum_{j=1}^{N}\bar{\ell}_{ij}(y_j(k-\kappa_{ij})-y_r(k-\kappa_{ir}))=y_i(k)-y_r(k-\kappa_{ir})-\sum_{j=1}^{N}\bar{d}_{ij}(y_j(k-\kappa_{ij})-y_r(k-\kappa_{ir}))
\end{equation}
and we define
\begin{equation}\label{bar-D}
\bar{D}=I-(2I+D_{in})^{-1}\bar{L}.
\end{equation}

It is easily verified that the matrix $\bar{D}$ is a matrix with all elements non negative and the sum of each row is less than or equal to $1$. 

In this paper, we also introduce a localized information exchange among protocols. In particular, each agent $i=1,\hdots N$ has access to localized information denoted by $\hat{\zeta}_i$, of the form 

\begin{equation}\label{info2}
\hat{\zeta}_i(k)=\frac{1}{2+d_{in}(i)}\sum_{j=1}^N\bar{\ell}_{ij}\xi_j(k-\hat{\kappa}_{ij})=\xi_i(k)-\sum_{j=1}^{N}\bar{d}_{ij}\xi_j(k-\hat{\kappa}_{ij})
\end{equation}

where $\xi_j \in \mathbb{R}^n$ is a variable produced internally by agent $j$ and to be defined in next sections while $\hat{\kappa}_{ij}\in\mathbb{N}^{+}$ ($i\neq j$) represents an unknown communication delay from agent $j$ to agent $i$.

We define the following definition.

\begin{definition}\label{def-synch}
	The agents of a MAS are said to achieve 
	\begin{itemize}
		\item delayed state  synchronization if 
		\begin{equation}\label{delayed-x-sync}
		\lim_{k\to\infty}\left[(x_i(k)-x_j(k-{\kappa}_{ij})\right]=0, \quad \text{for all } i,j\in \{1,\hdots,N\}.
		\end{equation}
		where ${\kappa}_{ij}$ represents communication delay from agent $j$ to agent $i$.
		
		\item and delayed regulated state synchronization if 
		\begin{equation}\label{delayed-reg-sync}
		\lim_{k\to\infty}\left[(x_i(k)-x_r(k-{\kappa}_{ir})\right]=0, \quad \text{for all } i\in \{1,\hdots,N\}.
		\end{equation}
		where $\kappa_{ir}$ represents the sum of delays from agent $i$ to the exosystem.
	\end{itemize}
\end{definition}

We formulate the following problem of delayed state synchronization for networks with unknown, nonuniform communication delays with linear dynamic protocols as follows.

\begin{problem}\label{prob-x}
Consider a MAS described by \eqref{agent} and  \eqref{zetabar2}. Let
$\mathbb{G}^N$ be the set of network graphs as defined in Definition \ref{ungrN}.

Then, the \textbf{scalable delayed regulated state synchronization problem based on localized information exchange utilizing collaborative protocols} for networks with unknown, non-uniform and arbitrarily large communication delay is to find, if possible, a linear dynamic protocol for each agent $i \in \{1,\hdots,N\}$, using only knowledge of agent model, i,e. $(A,B,C)$, of the form:
\begin{equation}\label{pro}
\begin{system}{cl}
{x}_{c,i}(k+1)&=A_c x_{c,i}(k)+B_{c1} \bar{\zeta}_i(k)+B_{c2} \hat{\zeta}_i(k),\\
u_i(k)&=F_c x_{c,i}(k),
\end{system}
\end{equation}
where $\hat{\zeta}_i(k)$ is defined in \eqref{info2} with $\xi_i(k)=H_c x_{c,i}(k)$ and $x_{c,i}\in \mathbb{R}^{n_c}$
such that for any $N$, any graph $\mathscr{G}\in \mathbb{G}^N$, any communication delays $\kappa_{ij}$ and $\hat{\kappa}_{ij}$ we achieve delayed regulated state synchronization as stated by \eqref{delayed-reg-sync} in Definition \ref{def-synch}.

\end{problem}

\subsection{Protocol Design}
In this section, we provide our results for scalable delayed regulated state synchronization of MAS with full- and partial-state coupling.

\subsubsection{\textbf{Full-state coupling}}
First we consider MAS with full-state coupling, i.e. $C=I$.\\

\begin{table}[h]
	\centering
    \captionsetup[table]{labelformat=empty}
    \caption*{Protocol 1. Full-state Coupling}
	\begin{tabular}{p{15cm}}
		\toprule
       We design collaborative protocols based on localized information exchanges for agents $i=1,\hdots, N$ as
	\begin{equation}\label{fscps}
\begin{system}{cl}
{\chi}_i(k+1)&=A\chi_i(k)+Bu_i(k)+A\bar{\zeta}_i(k)-A\hat{\zeta}_i(k),\\
u_i(k)&=-K\chi_i(k),
\end{system}
\end{equation}
	where $\bar{\zeta}_i(k)$ is defined by \eqref{zetabar2} and $\hat{\zeta}_i$ is given by
\begin{equation}\label{add_1}
\hat{\zeta}_i(k)=\chi_i(k)-\sum_{j=1}^{N}\bar{d}_{ij}\chi_j(k-\hat{\kappa}_{ij})
\end{equation}
which means the agents communicate $\xi_i(k)=\chi_i(k)$. Matrix $K$ is designed such that $A-BK$ is Schur stable.\\
		\bottomrule
	\end{tabular}
\end{table} 

Then, we have the following theorem for scalable delayed regulated state synchronization of MAS with full-state coupling.
\begin{theorem}\label{thm_f}
	Consider MAS \eqref{agent}, with $C=I$, consisting of $N$ agents satisfying Assumption \ref{agentass}. Let the associated network communication be given by \eqref{zetabar2}.  
	
	Then, the scalable delayed regulated state synchronization problem as defined in Problem \ref{prob-x} is solvable. In particular, the linear dynamic protocol \eqref{fscps} solves delayed regulated state synchronization problem for any $N$ and any graph
		$\mathcal{G}\in\mathbb{G}^N$.
\end{theorem}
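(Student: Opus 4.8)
The plan is to combine the delay transformation announced in the abstract with a triangular cascade argument that exploits the structure of $\mathbb{G}^N$. First I would introduce, for each $i$, the transformed state $\bar{x}_i(k):=x_i(k)-x_r(k-\kappa_{ir})$ (well defined for $k$ large; the finite history and the shifted reference $x_r(k-\kappa_{ir})$ are immaterial for an asymptotic statement). Since $x_r(k+1)=Ax_r(k)$, this obeys the \emph{same} dynamics $\bar{x}_i(k+1)=A\bar{x}_i(k)+Bu_i(k)$, and \eqref{delayed-reg-sync} is equivalent to $\bar{x}_i(k)\to0$ for all $i$. For $i\ge2$ let $p(i)$ denote the unique parent of node $i$ in the spanning tree; since $L$ is lower triangular, $p(i)<i$, and the delay bookkeeping gives $\kappa_{ir}=\kappa_{i,p(i)}+\kappa_{p(i),r}$. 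Using $C=I$ and this identity, \eqref{zetabar2} becomes, in the new coordinates,
\begin{equation*}
\bar{\zeta}_i(k)=\bar{x}_i(k)-\sum_{j=1}^{N}\bar{d}_{ij}\bar{x}_j(k-\kappa_{ij}),\qquad \kappa_{ii}:=0,
\end{equation*}
where $\bar{d}_{ii}$ and $\bar{d}_{i,p(i)}$ are the only nonzero entries of the $i$th row of $\bar{D}$ (with $\bar{d}_{11}=\tfrac12$), and every $\bar{d}_{ii}\in(0,1)$.

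Next I would set up the protocol error $e_i(k):=\bar{x}_i(k)-\chi_i(k)$. Substituting $u_i=-K\chi_i$ into \eqref{fscps} and using the displayed form of $\bar{\zeta}_i$ together with $\hat{\zeta}_i(k)=\chi_i(k)-\sum_j\bar{d}_{ij}\chi_j(k-\hat{\kappa}_{ij})$, a direct computation cancels all terms except
\begin{equation*}
e_i(k+1)=A\sum_{j=1}^{N}\bar{d}_{ij}\bigl(\bar{x}_j(k-\kappa_{ij})-\chi_j(k-\hat{\kappa}_{ij})\bigr),
\end{equation*}
so that $e_1(k+1)=\tfrac12Ae_1(k)$, while for $i\ge2$
\begin{equation*}
e_i(k+1)=\bar{d}_{ii}Ae_i(k)+A\bar{d}_{i,p(i)}\bigl(\bar{x}_{p(i)}(k-\kappa_{i,p(i)})-\chi_{p(i)}(k-\hat{\kappa}_{i,p(i)})\bigr).
\end{equation*}
Moreover, $\bar{x}_i(k+1)=(A-BK)\bar{x}_i(k)+BKe_i(k)$.

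Then I would close the argument by induction along the tree order $i=1,\dots,N$. By Assumption \ref{agentass} the spectral radius of $A$ is at most $1$, so $\bar{d}_{ii}A$ is Schur stable because $\bar{d}_{ii}<1$, and $A-BK$ is Schur stable by the design of $K$; recall that a Schur linear recursion driven by a vanishing external input has vanishing state. For $i=1$ the recursion for $e_1$ carries no delay, so $e_1(k)\to0$, and then $\bar{x}_1(k)\to0$. For $i\ge2$, assuming $\bar{x}_j(k)\to0$ and $e_j(k)\to0$ for all $j<i$ (in particular for $j=p(i)$), the inhomogeneous term in the $e_i$-recursion tends to zero, since $\bar{x}_{p(i)}(k-\kappa_{i,p(i)})\to0$ and $\chi_{p(i)}(k-\hat{\kappa}_{i,p(i)})=\bar{x}_{p(i)}(k-\hat{\kappa}_{i,p(i)})-e_{p(i)}(k-\hat{\kappa}_{i,p(i)})\to0$; hence $e_i(k)\to0$ and then $\bar{x}_i(k)\to0$. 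This gives $\bar{x}_i(k)\to0$ for all $i$, i.e.\ \eqref{delayed-reg-sync}, for arbitrary $N$, graph in $\mathbb{G}^N$, and delays.

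I expect the main obstacle to be the mismatch between the delays $\kappa_{ij}$ entering $\bar{\zeta}_i$ and the distinct, equally unknown delays $\hat{\kappa}_{ij}$ entering the localized exchange $\hat{\zeta}_i$: a naive computation does not yield a closed recursion in $e_i$ alone. The role of the term $-A\hat{\zeta}_i$ in \eqref{fscps}, combined with the lower-triangular and strictly sub-stochastic structure of $\bar{D}$, is precisely to confine the residual delay-dependent coupling in the $e_i$-dynamics to the single parent $p(i)$, whose error and transformed state have already been driven to zero by the induction — so the actual values of the delays become irrelevant. Some care is also needed with the well-posedness of $x_r(k-\kappa_{ir})$ and of the required history, but since synchronization is asymptotic these are handled by working with $k$ sufficiently large.
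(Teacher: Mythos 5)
Your argument is correct, and its skeleton coincides with the paper's: you pass to error coordinates relative to the (delay-shifted) exosystem, form the difference between the transformed state and the protocol state (your $e_i$ is the paper's $\delta_i=\tilde{x}_i-\bar{\chi}_i$, up to the equivalent bookkeeping of shifting the reference backward rather than the agents forward), and then exploit the lower-triangular, strictly sub-unity diagonal structure of $\bar{D}$ together with $\rho(A)\le 1$ and the Schur stability of $A-BK$ to run an induction along the tree order. Your verification of the cancellation giving $e_i(k+1)=A\sum_j\bar{d}_{ij}\bigl(\bar{x}_j(k-\kappa_{ij})-\chi_j(k-\hat{\kappa}_{ij})\bigr)$, of $\bar d_{11}=\tfrac12$ and $\bar d_{ii}=2/(2+d_{in}(i))\in(0,1)$, and of the delay identity $\kappa_{ir}=\kappa_{i,p(i)}+\kappa_{p(i),r}$ (the paper's $\kappa_{ij}=\kappa_{ir}-\kappa_{jr}$) all check out. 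Where you genuinely differ is the treatment of the delays: the paper transforms to the frequency domain and invokes a ``critical lemma'' from earlier work, requiring (i) stability of the delay-free closed loop and (ii) absence of unit-circle eigenvalues of the delay-dependent matrix $\bar D_{j\omega}(\kappa)$ block system for all $\omega$, and only then argues sequentially over agents; you instead stay entirely in the time domain and observe that, in a cascade, each subsystem is a Schur recursion ($\bar d_{ii}A$, resp.\ $A-BK$) driven by delayed versions of signals already shown to vanish, so arbitrary fixed delays are harmless. This buys a more elementary and self-contained proof (no external delay-robustness lemma, no frequency-domain manipulation of signals), at the cost of being tailored to the cascade/spanning-tree structure, whereas the frequency-domain criterion is the paper's general-purpose tool across both theorems; your acknowledged well-posedness caveat about $x_r(k-\kappa_{ir})$ for small $k$ is indeed immaterial, since one can start the analysis at $k\ge\max_i\kappa_{ir}$ and asymptotics of the linear recursions are unaffected.
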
 

\begin{proof}[Proof of Theorem \ref{thm_f}] First, we define
\[
\bar{x}_i(k)=x_i(k+{\kappa}_{ir})\text{ and } \bar{\chi}_i(k)=\chi_i(k+{\kappa}_{ir})
\]
where ${\kappa}_{ir}$ denotes the sum of delays from agent $i$ to the exosystem, and $\kappa_{ij}={\kappa}_{ir}-{\kappa}_{jr}$. Then we have
\begin{equation}\label{zeta-bar-bar}
\begin{system*}{cl}
\bar{\bar{\zeta}}_{i}(k)=&\bar{\zeta}_{i}(k+{\kappa}_{ir})=x_i(k+{\kappa}_{ir})-x_r(k)\\
&-\sum_{j=1}^{N}\bar{d}_{ij}(x_j(k+{\kappa}_{ir}-\kappa_{ij})-x_r(k))\\
&=\bar{x}_i(k)-x_r(k)-\sum_{j=1}^{N}\bar{d}_{ij}(\bar{x}_j(k)-{x}_r(k))
\end{system*}
\end{equation}
and 
\begin{equation}\label{zeta-hat-bar}
\begin{system*}{cl}
\hat{\bar{\zeta}}_{i}=\hat{\zeta}_{i}(k+{\kappa}_{ir})=\chi_i(k+\kappa_{ir})-\sum_{j=1}^{N}\bar{d}_{ij}\chi_j(k+\kappa_{ir}-\hat{\kappa}_{ij})\\
=\bar{\chi}_i(k)-\sum_{j=1}^{N}\bar{d}_{ij}\bar{\chi}_j(k+{\kappa}_{ij}-\hat{\kappa}_{ij}).
\end{system*}
\end{equation}
Then, by defining $\tilde{x}_i(k)=\bar{x}_i(k)-{x}_r(k)$ and
\[
\tilde{x}=\begin{pmatrix}
\tilde{x}_1\\\vdots \\ \tilde{x}_N
\end{pmatrix}, \quad \bar{\chi}=\begin{pmatrix}
\bar{\chi}_1\\\vdots \\ \bar{\chi}_N
\end{pmatrix}
\]
 we have the following closed-loop system in frequency domain as
\[
\begin{system}{cl}
e^{j\omega}{\tilde{x}}=&(I\otimes A) \tilde{x}-(I\otimes BK)\bar{\chi}\\
e^{j\omega}{\bar{\chi}}=&\left(I\otimes(A-BK)\right)\bar{\chi}+((I-\bar{D})\otimes A)\tilde{x}\\
&-( (I-\bar{D}_{j\omega}(\kappa))\otimes A)\bar{\chi}
\end{system}
\]
where
\begin{equation*}
\begin{system*}{cl}
\bar{D}_{j\omega}(\kappa)=
\begin{pmatrix}
\bar{d}_{11}&0&0&\cdots&0\\
\bar{d}_{21}e^{-j\omega (\hat{\kappa}_{21}-\kappa_{21})}&\bar{d}_{22}&0&\cdots&0\\
\vdots&\cdots&\ddots&\ddots&\vdots\\
\bar{d}_{N1}e^{-j\omega (\hat{\kappa}_{N1}-\kappa_{N1})}&\bar{d}_{N2}e^{-j\omega (\hat{\kappa}_{N2}-\kappa_{N2})}&\cdots&\cdots&\bar{d}_{NN}
\end{pmatrix}.
\end{system*}
\end{equation*}
Let $\delta=\tilde{x}-\bar{\chi}$. Then, we can obtain,
\begin{equation}
\begin{system}{cl}\label{error-full}
e^{j\omega}{\tilde{x}}&=(I\otimes (A-BK)) \tilde{x}+(I\otimes BK)\delta\\
e^{j\omega} \delta&=(\bar{D}_{j\omega}(\kappa)\otimes A)\delta+\left((\bar{D}-\bar{D}_{j\omega}(\kappa))\otimes A\right)\tilde{x}
\end{system}
\end{equation}

We prove \eqref{error-full} is asymptotically stable for all communication
delays ${\kappa}_{ij}\in \mathbb{N}^+$ and $\hat{\kappa}_{ij}\in \mathbb{N}^+$. Following the critical lemma \cite[Lemma 3]{zhang-saberi-stoorvogel-delay}, we first prove stability without
communication delays ${\kappa}_{ij}$ and $\hat{\kappa}_{ij}$ and then prove stability in presence of communication delays.
\begin{itemize}
	\item In the absence of communication delays in the network,
the stability of system \eqref{error-full} is equivalent to the stability of matrix
\begin{equation}
\begin{pmatrix}
	I\otimes (A-BK)&I\otimes BK\\
	0&\bar{D}\otimes A
\end{pmatrix}
\end{equation}
	where $\bar{D}=[\bar{d}_{ij}]\in\R^{N\times N}$ and we have that the eigenvalues of $\bar{D}$  are in open unit disk.
The eigenvalues of $\bar{D}\otimes A$ are of the form
$\lambda_i \mu_j$, with $\lambda_i$ and $\mu_j$ eigenvalues of
$\bar{D}$ and $A$, respectively \cite[Theorem 4.2.12]{horn-johnson}. Since $|\lambda_i|<1$ and
$|\mu_j|\leq 1$, we find $\bar{D}\otimes A$ is Schur stable. Then we have
\begin{equation}\label{estable}
\lim_{k\to \infty}\delta_i(k)\to 0
\end{equation}
Therefore, we have that the dynamics for $\delta_i(k)$ is asymptotically stable.

 \item  In the presence of communication delay, the closed-loop system \eqref{error-full} is asymptotically stable if
\begin{equation}\label{cond-full}
\det\left[e^{j\omega} I-\begin{pmatrix}
I\otimes (A-BK)&I\otimes BK\\
(\bar{D}-\bar{D}_{j\omega}(\kappa))\otimes A&\bar{D}_{j\omega}(\kappa)\otimes A
\end{pmatrix}\right]\ne 0
\end{equation}
for all $\omega \in \mathbb{R}$ and any communication delays ${\kappa}_{ij}\in \mathbb{N}^+$ and $\hat{\kappa}_{ij}\in \mathbb{N}^+$.
 Condition \eqref{cond-full} is satisfied if matrix
 \begin{equation}\label{cond-matrix}
 \begin{pmatrix}
 I\otimes (A-BK)&I\otimes BK\\
(\bar{D}-\bar{D}_{j\omega}(\kappa))\otimes A&\bar{D}_{j\omega}(\kappa)\otimes A
 \end{pmatrix}
 \end{equation}
 has no eigenvalues on the unit circle for all $\omega \in \mathbb{R}$. That is to say we just need to prove the stability of 
 \begin{equation}
\begin{system}{cl}\label{error-full-2}
e^{j\omega}{\tilde{x}}&=(I\otimes (A-BK)) \tilde{x}+(I\otimes BK)\delta\\
e^{j\omega} \delta&=(\bar{D}_{j\omega}(\kappa)\otimes A)\delta+\left((\bar{D}-\bar{D}_{j\omega}(\kappa))\otimes A\right)\tilde{x}
\end{system}
 \end{equation}
 Then, according to the structure of matrix $\bar{D}$, \eqref{error-full-2} can be rewritten as
  \begin{equation}
 \begin{system}{cl}\label{x-1}
 e^{j\omega} \tilde{x}_1&=  (A-BK)\tilde{x}_1+ BK\delta_1\\
 e^{j\omega} \delta_1&=\bar{d}_{11}A\delta_1
 \end{system}
 \end{equation}
 and
   \begin{equation}\label{x-i}
 \begin{system}{cl}
 e^{j\omega} \tilde{x}_i&=  (A-BK)\tilde{x}_i+ BK\delta_i\\
 e^{j\omega} \delta_i&=\bar{d}_{ii}A\delta_i+\sum_{j=1}^{i-1}\bar{d}_{ij}e^{-j\omega({\hat{\kappa}_{ij}-\kappa}_{ij})}A\delta_j\\
 \qquad&+\sum_{j=1}^{i-1}(1-e^{-j\omega{(\hat{\kappa}_{ij}-{\kappa}_{ij})}})\bar{d}_{ij}A\tilde{x}_j
 \end{system}
 \end{equation}
 for $i=2,\hdots, N$ and $j\le i-1$. 
 
 Then for $i=1$, The eigenvalues of $\bar{d}_{11}A$ are of the form $\bar{d}_{11}\lambda_i$, with $\lambda_i$ eigenvalues of $A$. since $|\bar{d}_{11}|<1$, and $|\lambda_i|\leq1$, one can obtain that all eigenvalues of $\bar{d}_{11}A$ are inside unit circle, that is 
 \[
 \delta_1\to 0 \text{ as } k\to \infty
 \]
 then, given that $A-BK$ is Schur stable, we have
  \[
 \tilde{x}_1\to 0 \text{ as } k\to \infty
 \]
 Therefore, the dynamics of $\tilde{x}_1$, and $\delta_1$ are asymptotically stable.
 
 Then for $i=2$ and $j=1$, we have 
\begin{equation}
  \begin{system}{cl}
 	e^{j\omega} \tilde{x}_2&=  (A-BK)\tilde{x}_2+ BK\delta_2\\
 	e^{j\omega} \delta_2&=\bar{d}_{22}A\delta_2+\bar{d}_{21}e^{-j\omega(\hat{\kappa}_{21}-{\kappa}_{21})}A\delta_1\\
 	&\hspace{2cm}+(1-e^{-j\omega{(\hat{\kappa}_{21}-{\kappa}_{21})}})\bar{d}_{21}A\tilde{x}_1
 \end{system}
\end{equation}
Since we have that dynamics of $\tilde{x}_1$ and $\delta_1$ are asymptotically stable, we just need to prove the stability of 
\begin{equation}
\begin{system}{cl}
j\omega \tilde{x}_2&=  (A-BK)\tilde{x}_2+ BK\delta_2\\
j\omega \delta_2&=\bar{d}_{22}A\delta_2
\end{system}
\end{equation}
Similar to the analysis of stability of system \eqref{x-1}, since $|\bar{d}_{22}|<1$, we have
\[
\delta_2\to 0, \text{ and } \tilde{x}_2\to 0,
\]
as $k \to \infty$. 
Similar to the case of $i=2$, we can obtain that \eqref{x-i}, for $i=3, \hdots, N$ and $j\le i-1$ is asymptotically stable, i.e. 
\[
\delta_i\to 0, \text{ and } \tilde{x}_i\to 0, \text{ as } k\to \infty.
\]
since we have $|\bar{d}_{ii}|<1$ and dynamics of $\tilde{x}_{i-1}$ and $\delta_{i-1}$ are asymptotically stable. Therefore we obtain that 
\[
\tilde{x}_i\to 0 \text{ as } k\to \infty
\]
for $i=2,\hdots, N$, which is equivalent to the stability of matrix \eqref{cond-matrix}. Then condition \eqref{cond-full} is satisfied. Therefore, based on \cite[Lemma 6]{zhang-saberi-stoorvogel-continues-discrete}, for all ${\kappa}_{ij}$ and $\hat{\kappa}_{ij}$,
\[
\bar{x}_i\to {x}_r
\]
as $k\to\infty$, which equivalently means that delayed synchronization \eqref{delayed-x-sync} is achieved.
\end{itemize}
\end{proof}

\subsubsection{\textbf{Partial-state coupling}}
In this subsection we consider MAS with partial-state coupling, i.e. $C\ne I$. \\

\begin{table}[h]
	\centering
	\captionsetup[table]{labelformat=empty}
	\caption*{Protocol 2: Partial-state Coupling}
	\begin{tabular}{p{15cm}}
		\toprule
		We design collaborative protocols based on localized information exchanges for agents $i=1,\hdots, N$ as
	\begin{equation}\label{pscps}
	\begin{system}{cl}
	{\hat{x}}_i(k+1)&=A\hat{x}_i(k)-BK\hat{\zeta}_i(k)+H(\bar{\zeta}_i(k)-C\hat{x}_i(k)),\\
	{\chi}_i(k+1)&=A\chi_i(k)+Bu_i(k)+A\hat{x}_i(k)-A\hat{\zeta}_i(k),\\
	u_i(k)&=-K\chi_i(k),
	\end{system}
	\end{equation}
		where $\bar{\zeta}_i(k)$ and $\hat{\zeta}_i(k)$ are defined by \eqref{zetabar2} and \eqref{add_1}. Matrix $K$ and $H$ are designed such that $A-BK$ and $A-HC$ are Schur stable.\\
		\bottomrule
\end{tabular}
\end{table} 
Then, we have the following theorem for scalable delayed regulated state synchronization of MAS with partial-state coupling.
\begin{theorem}\label{thm_p}
	Consider MAS \eqref{agent} consisting of $N$ agents satisfying Assumption \ref{agentass}. Let the associated network communication be given by \eqref{zetabar2}.  
	
	Then, the scalable delayed state synchronization problem as defined in Problem \ref{prob-x} is solvable. In particular, the linear dynamic protocol \eqref{pscps} solves delayed regulated state synchronization problem for any $N$ and any graph
	$\mathcal{G}\in\mathbb{G}^N$.
\end{theorem}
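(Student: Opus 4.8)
The plan is to run the argument of the proof of Theorem~\ref{thm_f} essentially verbatim, with one extra layer inserted to absorb the observer state $\hat{x}_i$ of \eqref{pscps}. First I would apply the same delay transformation, putting $\bar{x}_i(k)=x_i(k+\kappa_{ir})$, $\bar{\chi}_i(k)=\chi_i(k+\kappa_{ir})$ and, additionally, $\bar{\hat{x}}_i(k)=\hat{x}_i(k+\kappa_{ir})$, where $\kappa_{ir}$ is the sum of delays from agent $i$ to the exosystem and $\kappa_{ij}=\kappa_{ir}-\kappa_{jr}$. Exactly as in \eqref{zeta-bar-bar}--\eqref{zeta-hat-bar}, the transformed network signal becomes delay-free, $\bar{\bar{\zeta}}_i(k)=C\tilde{x}_i(k)-\sum_{j}\bar{d}_{ij}C\tilde{x}_j(k)$ with $\tilde{x}_i=\bar{x}_i-x_r$, whereas the transformed localized signal $\hat{\bar{\zeta}}_i(k)=\bar{\chi}_i(k)-\sum_{j}\bar{d}_{ij}\bar{\chi}_j(k+\kappa_{ij}-\hat{\kappa}_{ij})$ still carries the residual delays $\kappa_{ij}-\hat{\kappa}_{ij}$; here detectability of $(A,C)$ from Assumption~\ref{agentass} is what guarantees the gain $H$ exists.

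The key new ingredient is the choice of error coordinates. Besides $\delta_i=\tilde{x}_i-\bar{\chi}_i$ as in the full-state case, I would introduce the observer error $e_i=\bar{\hat{x}}_i-(\tilde{x}_i-\sum_{j}\bar{d}_{ij}\tilde{x}_j)$, that is, compare $\hat{x}_i$ with the state preimage of the network signal $\bar{\zeta}_i$ that feeds its innovation term; any other choice (e.g.\ comparing $\hat{x}_i$ against $\bar{\chi}_i$ or against $\tilde{x}_i$ directly) fails to decouple the observer error. A direct computation from \eqref{pscps} then gives $e_i(k+1)=(A-HC)e_i(k)-BK\sum_{j<i}\bar{d}_{ij}\big(\bar{\chi}_j(k)-\bar{\chi}_j(k+\kappa_{ij}-\hat{\kappa}_{ij})\big)$ and $\delta_i(k+1)=\bar{d}_{ii}A\delta_i(k)-Ae_i(k)+\sum_{j<i}\bar{d}_{ij}A\delta_j(k)+A\sum_{j<i}\bar{d}_{ij}\big(\bar{\chi}_j(k)-\bar{\chi}_j(k+\kappa_{ij}-\hat{\kappa}_{ij})\big)$, while $\tilde{x}_i$ evolves as $(A-BK)\tilde{x}_i(k)+BK\delta_i(k)$ exactly as in Theorem~\ref{thm_f}; note $\bar{\chi}_j=\tilde{x}_j-\delta_j$, and all cross-agent couplings as well as all residual-delay terms involve only strictly-lower-index agents, which is where the lower-triangular structure of $\bar{D}$ (Definition~\ref{ungrN}) is used.

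From here I would follow the two-step recipe behind the critical lemma \cite[Lemma 3]{zhang-saberi-stoorvogel-delay} invoked in the proof of Theorem~\ref{thm_f}. Step one: discarding the residual-delay terms, the stacked $(e,\delta,\tilde{x})$ system is block triangular with diagonal blocks $I\otimes(A-HC)$, $\bar{D}\otimes A$ and $I\otimes(A-BK)$; the outer two are Schur by Assumption~\ref{agentass} and the designs of $H$ and $K$, while $\bar{D}\otimes A$ is Schur because $\bar{D}$ is lower triangular with diagonal entries in $(0,1)$, hence its spectrum lies in the open unit disc, $A$ has spectral radius at most one, and the eigenvalues of $\bar{D}\otimes A$ are the pairwise products of those of $\bar{D}$ and $A$ \cite[Theorem 4.2.12]{horn-johnson}. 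Step two: in the frequency domain, using the lower-triangular structure of $\bar{D}$, I would peel the agents off one at a time starting from the root $i=1$, for which $e_1,\delta_1,\tilde{x}_1$ obey a delay-free cascade of Schur systems; at stage $i$ the subsystem for $(e_i,\delta_i,\tilde{x}_i)$ has a system matrix built only from $A-HC$, $\bar{d}_{ii}A$ and $A-BK$ (all Schur, since $|\bar{d}_{ii}|<1$), driven only by the already-vanishing signals of agents $1,\ldots,i-1$ and by residual-delay factors of the form $1-e^{-j\omega(\hat{\kappa}_{ij}-\kappa_{ij})}$ multiplying those same signals. This shows the transformed closed-loop matrix has no eigenvalue on the unit circle for any $\omega\in\mathbb{R}$ and any $\kappa_{ij},\hat{\kappa}_{ij}\in\mathbb{N}^{+}$; combining the two steps and passing back to the time domain via \cite[Lemma 6]{zhang-saberi-stoorvogel-continues-discrete} yields $\tilde{x}_i(k)\to0$, i.e.\ $x_i(k+\kappa_{ir})-x_r(k)\to0$, which is precisely the delayed regulated state synchronization \eqref{delayed-reg-sync} required by Problem~\ref{prob-x}.

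I expect the main obstacle to be exactly the construction of the observer error coordinate: only the combination $e_i=\bar{\hat{x}}_i-(\tilde{x}_i-\sum_{j}\bar{d}_{ij}\tilde{x}_j)$ makes the homogeneous part of the observer-error dynamics equal to $A-HC$ (a hasty choice produces a $-HC$-type matrix, which need not be Schur, and the triangular cascade of Theorem~\ref{thm_f} then breaks down). A secondary, more bookkeeping-heavy point is tracking the mismatch between the fictitious delays $\kappa_{ij}$ (which cancel in $\bar{\bar{\zeta}}_i$) and the genuine delays $\hat{\kappa}_{ij}$ (which do not): one must verify that this mismatch enters only through the vanishing factors $1-e^{-j\omega(\hat{\kappa}_{ij}-\kappa_{ij})}$ attached to strictly-lower-index agents, so that the peeling argument goes through unchanged for arbitrarily large, non-uniform delays.
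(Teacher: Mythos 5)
Your proposal is correct and follows essentially the same approach as the paper: the delay shift by $\kappa_{ir}$, the coordinates $\delta_i=\tilde{x}_i-\bar{\chi}_i$ together with an observer-error coordinate whose homogeneous dynamics are $A-HC$, and the two-step argument (delay-free Schur check, then agent-by-agent peeling exploiting the lower-triangular $\bar{D}$) invoking the cited delay lemmas. The only, cosmetic, difference is that the paper defines $\bar{\delta}=((I-\bar{D}_{j\omega}(\kappa))\otimes I)\tilde{x}-\hat{\bar{x}}$ so the delay-mismatch terms enter only the $\bar{\delta}$ equation, whereas your $e_i$ uses the undelayed $\bar{D}$ and spreads the mismatch terms over both the $e_i$ and $\delta_i$ equations; both choices yield the same triangular cascade, so your remark that yours is the only workable coordinate is a slight overstatement.
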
 
\begin{proof}[Proof of Theorem \ref{thm_p}]
	Similar to the proof of Theorem \ref{thm_f} and by defining $\hat{\bar{x}}_i(k)=\hat{x}_i(k+\bar{\kappa}_{i,r})$ and $\hat{\bar{x}}=\begin{pmatrix}\hat{\bar{x}}_1\T,\hdots,\hat{\bar{x}}_N\T\end{pmatrix}\T$, we have the following closed-loop system in frequency domain as
	\[
	\begin{system}{cl}
	e^{j\omega}{\tilde{x}}&=(I\otimes A) \tilde{x}-(I\otimes BK)\bar{\chi}\\
	e^{j\omega}{\bar{\chi}}&=\left(I\otimes(A-BK)\right)\bar{\chi}+(I\otimes A)\hat{\bar{x}}-\left((I-\bar{D}_{j\omega}(\kappa))\otimes A\right)\bar{\chi}\\
		e^{j\omega}\hat{\bar{x}}&=\left(I\otimes(A-HC)\right)\hat{\bar{x}}-( (I-\bar{D}_{j\omega}(\kappa))\otimes BK)\bar{\chi}\\
	&\hspace{4.1cm}+((I-\bar{D})\otimes HC)\tilde{x}\end{system}
	\]
	then, by defining $\delta=\tilde{x}-\bar{\chi}$, and $\bar{\delta}=((I-\bar{D}_{j \omega}(\kappa))\otimes I)\tilde{x}-\hat{\bar{x}}$, we obtain
		\begin{equation}\label{error-partial}
	\begin{system}{cl}
	e^{j\omega}{\tilde{x}}&=(I\otimes A) \tilde{x}-(I\otimes BK)\bar{\chi}\\
	e^{j\omega}{\delta}&=(\bar{D}_{j\omega}(\kappa)\otimes A)\delta+(I\otimes A)\bar{\delta}\\
	e^{j\omega}{\bar{\delta}}&=\left(I\otimes(A-HC)\right)\bar{\delta}-((\bar{D}_{j\omega}(\kappa)-\bar{D})\otimes HC){\tilde{x}}
	\end{system}
	\end{equation}
	
	We prove \eqref{error-partial} is asymptotically stable for all communication
	delays ${\kappa}_{ij}\in \mathbb{N}^+$ and $\hat{\kappa}_{ij}\in \mathbb{N}^+$. Following the critical
	\cite[Lemma 6]{zhang-saberi-stoorvogel-continues-discrete}, we first prove stability without
	communication delays ${\kappa}_{ij}$ and $\hat{\kappa}_{ij}$ and then prove stability in presence of communication delays.
	\begin{itemize}
		\item In the absence of communication delays in the network,
		the stability of system \eqref{error-partial} is equivalent to the stability of matrix
		\begin{equation}
		\begin{pmatrix}
		I\otimes (A-BK)&I\otimes BK&0\\
		0&\bar{D}\otimes A&I\otimes A\\
		0&0&I\otimes(A-HC)
		\end{pmatrix}
		\end{equation}
		similar to the proof of Theorem \ref{thm_f}, we have all eigenvalues of $\bar{D}\otimes A$ are inside the unit disc.
		Then, since we have that $A-BK$ and $A-HC$ are Schur stable, we obtain that 
		\[
		\lim_{k\to\infty}\tilde{x}\to 0.
		\]
		It implies that $\bar{x}_i\to {x}_r$.
		\item  In the presence of communication delay, the closed-loop system \eqref{error-partial} is asymptotically stable if
		\begin{equation}\label{cond-partial}
		\det\left[j\omega I-\begin{pmatrix}
		I\otimes (A-BK)&I\otimes BK&0\\
		0&\bar{D}_{j\omega}(\kappa)\otimes A&I\otimes A\\
		(\bar{D}-\bar{D}_{j\omega}(\kappa))\otimes HC)&0&I\otimes(A-HC)
		\end{pmatrix}\right]\ne 0
		\end{equation}
		for all $\omega \in \mathbb{R}$ and any communication delays ${\kappa}_{ij}\in \mathbb{N}^+$ and $\hat{\kappa}_{ij}\in \mathbb{N}^+$.
		Condition \eqref{cond-partial} is satisfied if matrix
		\begin{equation}\label{cond-matrix-2}
	\begin{pmatrix}
	I\otimes (A-BK)&I\otimes BK&0\\
	0&\bar{D}_{j\omega}(\kappa)\otimes A&I\otimes A\\
	(\bar{D}-\bar{D}_{j\omega}(\kappa))\otimes HC)&0&I\otimes(A-HC)
	\end{pmatrix}
		\end{equation}
		has no eigenvalues on the unit circle for all $\omega \in \mathbb{R}$. 
		
		Then, according to the structure of matrix $\bar{D}$, \eqref{error-partial} can be rewritten as
		\begin{equation}
		\begin{system}{cl}\label{y-2}
		e^{j\omega} \tilde{x}_1&= (A-BK)\tilde{x}_1+ BK\delta_1\\
		e^{j\omega}{\delta}_1&=\bar{d}_{11}A\delta_1+A\bar{\delta}_1\\
		e^{j\omega}\bar{\delta}_1&=\left(A-HC\right)\bar{\delta}_1
		\end{system}
		\end{equation}
		and
		\begin{equation}\label{y-i}
		\begin{system}{cl}
		e^{j\omega} \tilde{x}_i&=  (A-BK)\tilde{x}_i+ BK\delta_i\\
		e^{j\omega} \delta_i&=\bar{d}_{ii}A\delta_i+\sum_{j=1}^{i-1}\bar{d}_{ij}e^{j\omega({\kappa}_{ij}-\hat{\kappa}_{ij})}A\delta_j+A\bar{\delta}_i\\
		e^{j\omega}\bar{\delta}_i&=\left(A-HC\right)\bar{\delta}_i+\sum_{j=1}^{i-1}(1-e^{j\omega{({\kappa}_{ij}-\hat{\kappa}_{ij})}})\bar{d}_{ij}HC\tilde{x}_j
		\end{system}
		\end{equation}
		for $i=2,\hdots, N$ and $j\le i-1$. 
		
		Then for $i=1$, we have
		\[
		\bar{\delta}_1\to 0 \text{ as } k\to \infty
		\]
		since $A-HC$ is Schur stable. In the following, since $\bar{d}_{11}<1$, one can obtain that all eigenvalues of $\bar{d}_{11}A$ are inside the unit disc, that is 
		\[
		\delta_1\to 0 \text{ as } k\to \infty
		\]
		then, given that $A-BK$ is Schur stable, we have
		\[
		\tilde{x}_1\to 0 \text{ as } k\to \infty
		\]
		Therefore, the dynamics of $\tilde{x}_1$, $\delta_1$ and $\bar{\delta}_1$ are asymptotically stable.
		
		Then for $i=2$ and $j=1$, we have 
		\begin{equation}
		\begin{system}{cl}
		e^{j\omega} \tilde{x}_2&=  (A-BK)\tilde{x}_2+ BK\delta_2\\
		e^{j\omega} \delta_2&=\bar{d}_{22}A\delta_2+\bar{d}_{21}e^{j\omega({\kappa}_{21}-\hat{\kappa}_{21})}A\delta_1+A\bar{\delta}_2\\
		e^{j\omega}\bar{\delta}_2&=(A-HC)\bar{\delta}_2+(1-e^{j\omega{({\kappa}_{21}-\hat{\kappa}_{21})}})\bar{d}_{21}HC\tilde{x}_1
		\end{system}
		\end{equation}
		Since we have that dynamics of $\tilde{x}_1$ and $\delta_1$ are asymptotically stable, we just need to prove the stability of 
		\begin{equation}
		\begin{system}{cl}
		e^{j\omega} \tilde{x}_2&=  (A-BK)\tilde{x}_2+ BK\delta_2\\
		e^{j\omega} \delta_2&=\bar{d}_{22}A\delta_2+A\bar{\delta}_2\\
		e^{j\omega}\bar{\delta}_2&=(A-HC)\bar{\delta}_2
		\end{system}
		\end{equation}
		Similar to the analysis of stability of system \eqref{y-2}, since $\bar{d}_{22}<1$, we have
		\[
		\delta_2\to 0,\text{ } \bar{\delta}_2\to 0, \text{ and } \tilde{x}_2\to 0,
		\]
		as $k \to \infty$. 
		Similar to the case of $i=2$, we can obtain that \eqref{y-i}, for $i=3, \hdots, N$ and $j\le i-1$ is asymptotically stable, i.e. 
		\[
		\delta_i\to 0, \text{ }\bar{\delta}_i\to 0, \text{ and } \tilde{x}_i\to 0, \text{ as } k\to \infty.
		\]
		since we have $\bar{d}_{ii}<1$ and dynamics of $\tilde{x}_{i-1}$ and $\delta_{i-1}$ are asymptotically stable. Therefore we obtain that 
		\[
		\tilde{x}_i\to 0 \text{ as } k\to \infty
		\]
		for $i=2,\hdots, N$, which is equivalent to the stability of system \eqref{error-partial}. Then condition \eqref{cond-partial} is satisfied. Therefore, based on \cite[Lemma 3]{zhang-saberi-stoorvogel-delay}, for all ${\kappa}_{ij}$ and $\hat{\kappa}_{ij}$,
		\[
		\bar{x}_i\to {x}_r
		\]
		as $k\to\infty$, which means that delayed synchronization \eqref{delayed-x-sync} is achieved.
	\end{itemize}
\end{proof}

\section{Heterogeneous MAS with introspective agents}\label{OS}
In this section, we study a heterogeneous MAS consisting of $N$ non-identical linear agents:
\begin{equation}\label{hete_sys}
	\begin{system*}{cl}
		x_i(k+1)&=A_ix_i(k)+B_iu_i(k),\\
		y_i(k)&=C_ix_i(k),
	\end{system*}
\end{equation}
where $x_i\in\mathbb{R}^{n_i}$, $u_i\in\mathbb{R}^{m_i}$ and $y_i\in\mathbb{R}^p$ are the state,
input, output of agent $i$
for $i=1,\ldots, N$.

The agents are introspective, meaning that each agent has access to its own local information. Specifically each agent has access to part of its state
\begin{equation}\label{local}
	z_i(k)=C_i^mx_i(k).
\end{equation}
where $z_i\in \mathbb{R}^{q_i}$.

The communication network provides agent $i$ with localized information \eqref{zetabar2} which is a linear combination of its own output relative to that of other agents. The agents have also access to the localized information defined by \eqref{info2}. We define the following definition.

\begin{definition}\label{def-synch-y}
	The agents of a heterogeneous MAS are said to achieve 
	\begin{itemize}
		\item delayed output synchronization if 
		\begin{equation}\label{delayed-y-sync}
			\lim_{k\to\infty}\left[(y_i(k)-y_j(k-{\kappa}_{ij})\right]=0, \quad \text{for all } i,j\in \{1,\hdots,N\}.
		\end{equation}
		where ${\kappa}_{ij}$ represents communication delay from agent $j$ to agent $i$.
		
		\item and delayed regulated output synchronization if 
		\begin{equation}\label{delayed-reg-sync-y}
			\lim_{k\to\infty}\left[(y_i(k)-y_r(k-{\kappa}_{ir})\right]=0, \quad \text{for all } i\in \{1,\hdots,N\}.
		\end{equation}
		where $\kappa_{ir}$ represents the sum of delays from agent $i$ to the exosystem.
	\end{itemize}
\end{definition}
We formulate the regulated output synchronization problem for heterogeneous network as follows.
\begin{problem}\label{prob_y}
	Consider a MAS \eqref{hete_sys} and \eqref{zetabar2}. Let
	$\mathbb{G}^N$ be the set of network graphs as defined in Definition \ref{ungrN}.
	
	The \textbf{scalable delayed regulated output synchronization problem based on localized information exchange utilizing collaborative protocols} for heterogeneous networks with unknown nonuniform and arbitrarily large communication delay is to find, if possible, a linear dynamic protocol for each agent $i\in\{1,...,N\}$, using only knowledge of agent models, i.e. $(C_i, A_i, B_i)$ of the form:
	\begin{equation}
		\begin{system}{cl}
			{x}_{i,c}(k+1)&=A_{i,c}x_{i,c}(k)+ B_{i,c}\bar{\zeta}_i(k)+ C_{i,c}\hat{\zeta}_i(k)+D_{i,c}z_i(k)\\
			u_i(k)&=E_{i,c}x_{i,c}(k)+F_{i,c}\bar{\zeta}_{i}(k)+G_{i,c}\hat{\zeta}_i(k)+H_{i,c}z_i(k),
		\end{system}
	\end{equation}
	where $\hat{\zeta}_i$ is defined by \eqref{info2} with $\xi_i=H_c x_{c,i}$ and $x_{c,i}\in \mathbb{R}^{n_c}$
	such that for any $N$, any graph $\mathscr{G}\in \mathbb{G}^N$, any communication delays $\kappa_{ij}$ and $\hat{\kappa}_{ij}$ we achieve delayed regulated output synchronization as stated by \eqref{delayed-reg-sync-y} in Definition \ref{def-synch-y}.
\end{problem}
We make the following assumptions on agents and the exosystem.
\begin{assumption}\label{ass2} For agents $i \in \{1,..., N\}$,
	\begin{enumerate}
		\item $(C_i,A_i,B_i)$ is stabilizable and detectable.
		\item $(C_i,A_i,B_i)$ is right-invertible.
		\item $(C_i^m, A_i)$ is detectable.
	\end{enumerate}
	
	\begin{assumption}\label{ass-exo}
		For exosystem,
		\begin{enumerate}
			\item $(C_r, A_r)$ is observable.
			\item  All the eigenvalues of $A_r$ are on the unit circle.
		\end{enumerate}
	\end{assumption}
\end{assumption}
We design scale-free protocols to solve scalable delayed regulated output synchronization problem as stated in Problem \ref{prob_y}. After introducing the architecture of our protocol, we design the protocols through four steps.

\subsection{{Architecture of the protocol}}
Our protocol has the structure shown below in Figure \ref{Heterogeneous_reg}.

\begin{figure}[h]
	\includegraphics[width=10cm, height=6cm]{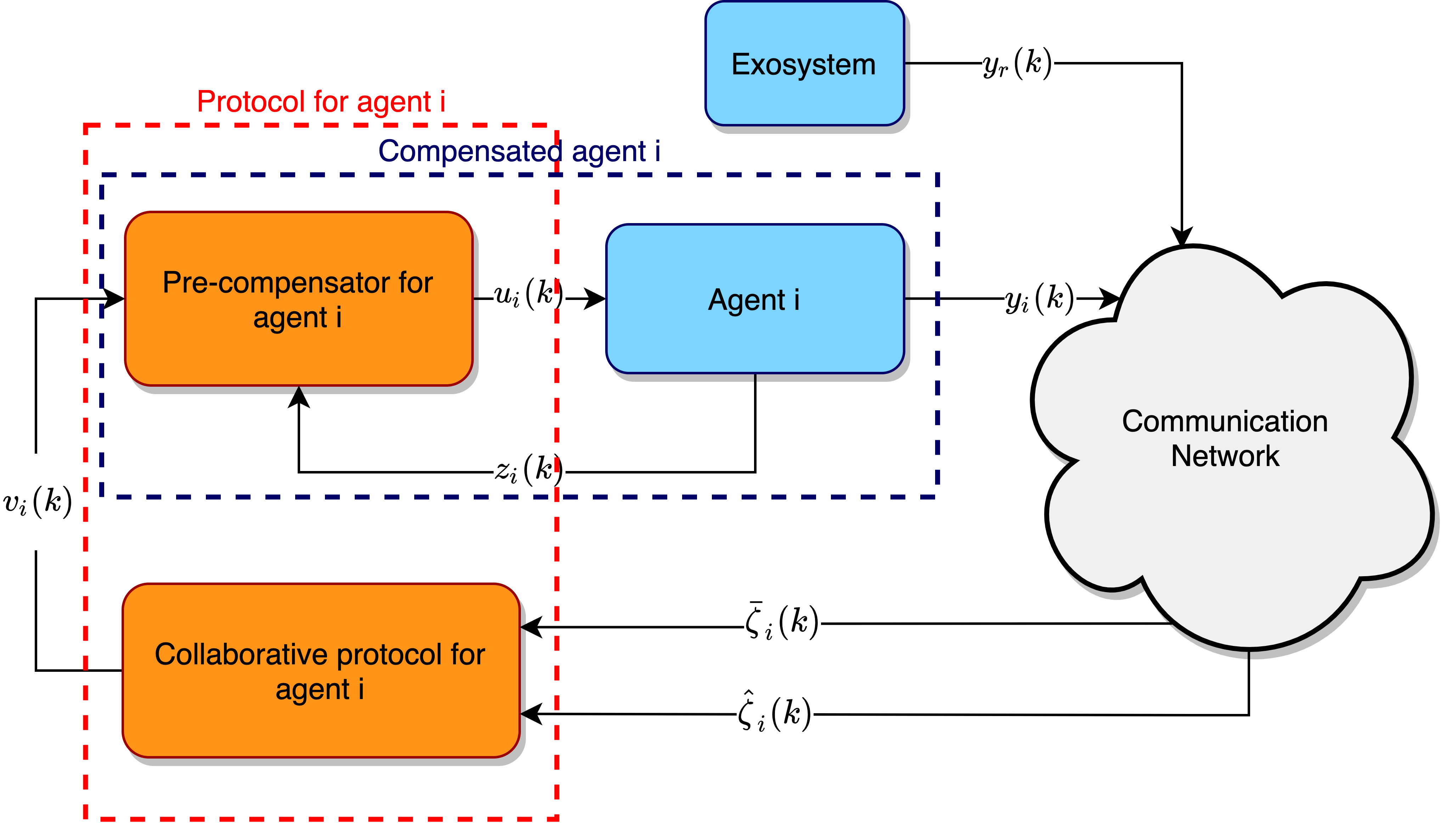}
	\centering
	\caption{Architecture of scale-free protocols for output synchronization of heterogeneous networks }\label{Heterogeneous_reg}
	\vspace*{-.3cm}
\end{figure}
As seen in the figure, our design methodology consists of two major modules.

\begin{enumerate}
	\item The first module is remodeling the exosystem to obtain the target model by designing pre-compensators following our previous results in \cite{yang-saberi-stoorvogel-grip-journal}.
	\item The second module is designing collaborate protocols for almost homogenized agents to achieve output and regulated output synchronization.
\end{enumerate}
\subsection{Protocol design}
To design our protocols, first we recall the following Lemma.
\begin{lemma}[\cite{yang-saberi-stoorvogel-grip-journal}]\label{lem-exo} There exists another exosystem given by:
	\begin{equation}\label{exo-2}
		\begin{system*}{cl}
			{\check{x}}_r(k+1)&=\check{A}_r\check{x}_r(k), \quad \check{x}_r(0)=\check{x}_{r0}\\
			y_r(k)&=\check{C}_r\check{x}_r(k),
		\end{system*}
	\end{equation}
	such that for all $x_{r0} \in \mathbb{R}^r$, there exists $\check{x}_{r0}\in \mathbb{R}^{\check{r}}$ for which \eqref{exo-2} generate exactly the same output $y_r$ as the original exosystem \eqref{exo}. Furthermore, we can find a matrix $\check{B}_r$ such that the triple $(\check{C}_r,\check{A}_r,\check{B}_r)$ is invertible, of uniform rank $n_q$, and has no invariant zero, where $n_q$ is an integer greater than or equal to maximal order of infinite zeros of $(C_i,A_i,B_i), i\in \{1,...,N\}$ and all the observability indices of $(C_r, A_r)$. Note that the eigenvalues of $\check{A}_r$ consists of all eigenvalues of $A_r$ and additional zero eigenvalues. 
\end{lemma}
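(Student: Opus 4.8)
The plan is to carry out an explicit state-space construction. First I would observe that only the \emph{output} trajectory $y_r$ of \eqref{exo} is relevant, so I may replace $(C_r,A_r)$ by its observable part and, after factoring $C_r$ through a fixed full-row-rank surjection if necessary, assume $C_r$ has full row rank $p$. Then I would put the observable pair $(C_r,A_r)$ into observability (dual Brunovsky) canonical form. This exhibits $p$ output chains of lengths $\nu_1,\dots,\nu_p$, the observability indices of $(C_r,A_r)$; by the hypothesis on $n_q$ each $\nu_\ell\le n_q$. In these coordinates the $\ell$-th component of $y_r$ is the top state of the $\ell$-th chain, the intra-chain dynamics are the unit forward shift, and all coupling — both within and between chains — sits in the output-feedback columns (the ``top-of-chain'' columns).

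The core step is the augmentation. To each chain $\ell$ I append $n_q-\nu_\ell$ fresh states below its current bottom, wired as a pure nilpotent shift whose top entry feeds into the former bottom equation of the chain, so that every chain has length exactly $n_q$ and $\check r=pn_q=r+\sum_\ell(n_q-\nu_\ell)$. Taking $\check A_r$ to be the resulting matrix, $\check C_r$ the read-out of the tops, and $\check B_r$ the injection of the $\ell$-th input into the new bottom of chain $\ell$, I then need to verify four things. (i) Same output: if the appended states start at zero they remain zero along the free motion (nothing feeds into them), so $\check C_r\check A_r^{\,k}$ applied to $x_{r0}$ (in canonical coordinates) padded by zeros reproduces $C_rA_r^{\,k}x_{r0}$ for all $k\ge0$, which supplies the required $\check x_{r0}$. (ii) Spectrum: the appended blocks are nilpotent and feed forward into the original ones, so a block-triangular determinant gives $\det(zI-\check A_r)=z^{\sum_\ell(n_q-\nu_\ell)}\det(zI-A_r)$, i.e.\ the eigenvalues of $\check A_r$ are those of $A_r$ together with $\sum_\ell(n_q-\nu_\ell)$ zeros. (iii) Uniform rank $n_q$: walking a pulse up a chain never touches a feedback column until the last step, so $\check C_r\check A_r^{\,i}\check B_r=0$ for $i=0,\dots,n_q-2$ and $\check C_r\check A_r^{\,n_q-1}\check B_r=I_p$, which is exactly uniform rank $n_q$ together with square invertibility. (iv) No invariant zeros: the Rosenbrock system matrix $\begin{pmatrix} zI-\check A_r & \check B_r\\ -\check C_r & 0\end{pmatrix}$ has constant nonzero determinant (each decoupled length-$n_q$ chain contributes a factor with no finite zero), so the structure algorithm terminates after $n_q$ steps with no zero dynamics.

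The step I expect to be the main obstacle is making the augmentation work \emph{simultaneously} across all $p$ chains: the new states must be attached to every chain at once while (a) not destroying observability, so the embedded $A_r$-dynamics are still faithfully reflected in $y_r$, and (b) choosing $\check B_r$ so that the $p$ input channels act on genuinely decoupled chains — this decoupling is precisely what forces the leading Markov parameter to be nonsingular and eliminates all invariant zeros, and it is where the canonical form really earns its keep. The remaining points — the reduction when $C_r$ is not full row rank, and checking that the value $n_q:=\max\{\text{orders of the infinite zeros of }(C_i,A_i,B_i),\ \text{observability indices of }(C_r,A_r)\}$ is admissible (the construction goes through verbatim for \emph{any} $n_q\ge\max_\ell\nu_\ell$, so one may take it as large as the homogenization module in the sequel requires) — are routine bookkeeping once the single-chain picture is in hand.
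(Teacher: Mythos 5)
The paper does not prove this lemma at all: it is imported verbatim, with citation, from \cite{yang-saberi-stoorvogel-grip-journal}, and the construction used there is essentially the one you give (observability canonical form, padding each output chain with zero-eigenvalue shift states to uniform length $n_q$, input injected at the chain bottoms, with invariant zeros and the leading Markov parameter unaffected by the output-injection couplings). Your verification of the four properties is sound — note only that Assumption \ref{ass-exo} already grants observability of $(C_r,A_r)$, so your preliminary reduction to the observable part and the full-row-rank bookkeeping are not needed here, and the ``no invariant zeros'' step is cleanest by observing that the inter-chain couplings are pure output injection, which leaves the Rosenbrock determinant unchanged.
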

We design our protocols through the four steps in \emph{Protocol 3}.
\begin{table}[h!]
	\centering
	\captionsetup[table]{labelformat=empty}
	\caption*{\textbf{Protocol 3.} Heterogeneous MAS}
	\begin{tabular}{p{15cm}}
		\toprule
		\textbf{Step 1: remodeling the exosystem} First, we remodel the exosystem to arrive at suitable choice for the target model $(\check{C}_r,\check{A}_r,\check{B}_r)$ following the design procedure in \cite{yang-saberi-stoorvogel-grip-journal} summarized in Lemma \ref{lem-exo}.
		
		\textbf{Step 2: designing pre-compensators} In this step, given the target model $(\check{C}_r,\check{A}_r,\check{B}_r)$, by utilizing the design methodology from \cite[Appendix B]{yang-saberi-stoorvogel-grip-journal}, we design a pre-compensators for each agent $i \in \{1, \dots, N\}$ of the form
		
		\begin{equation}\label{pre_con}
			\begin{system}{cl}
				{\xi}_i(k+1)&=A_{i,h}\xi_i(k)+B_{i,h}z_i(k)+E_{i,h}v_i(k),\\
				u_i(k)&=C_{i,h}\xi_i(k)+D_{i,h}v_i(k),
			\end{system}
		\end{equation} 
		which yields the compensated agents as
		\begin{equation}\label{sys_homo}
			\begin{system*}{cl}
				{{x}}^h_i(k+1)&=\check{A}_r{x}^h_i(k)+\check{B}_r(v_i(k)+\rho_i(k)),\\
				{y}_i(k)&=\check{C}_rx^h_i(k),
			\end{system*}
		\end{equation} 
		where $\rho_i(k)$ is given by 
		\begin{equation}\label{sys-rho}
			\begin{system*}{cl}
				{\omega}_i(k+1)&=A_{i,s}\omega_i(k),\\
				\rho_i(k)&=C_{i,s}\omega_i(k),
			\end{system*}
		\end{equation}
		and $A_{i,s}$ is Schur stable. Note that the compensated agents are homogenized and have the target model $(\check{C}_r, \check{A}_r, \check{B}_r)$.

		\textbf{Step 3: designing collaborative protocols for the compensated agents} Collaborative protocols based on localized information exchanges are designed for the compensated agents $i=1,\hdots, N$ as
		
		\begin{equation}\label{pscp2}
		\begin{system}{cl}
{\hat{x}}_i(k+1)&=\check{A}_r\hat{x}_i(k)-\check{B}_rK\hat{\zeta}_i(k)+H(\bar{\zeta}_i(k)-\check{C}_r\hat{x}_i(k)),\\
{\chi}_i(k+1)&=\check{A}_r\chi_i(k)+\check{B}_rv_i(k)+\check{A}_r\hat{x}_i(k)-\check{A}_r\hat{\zeta}_i(k),\\
v_i(k)&=-K\chi_i(k),
\end{system}
		\end{equation}
		where $H$ and $K$ are matrices such that $\check{A}_r-H\check{C}_r$ and $\check{A}_r-\check{B}_rK$ are Schur stable.
		The exchanging information $\hat{\zeta}_i$ is defined as \eqref{info2} and $\bar{\zeta}_i$ is defined as \eqref{zetabar2}. 
		
		\textbf{Step 4: obtaining the protocols}
		The final protocol which is the combination of module $1$ and $2$ is
		\begin{equation}\label{pscp2final}
			\begin{system}{cl}
				{\xi}_i(k+1)&=A_{i,h}\xi_i(k)+B_{i,h}z_i(k)-E_{i,h}K\chi_i(k),\\
				{\hat{x}}_i(k+1)&=\check{A}_r\hat{x}_i(k)-\check{B}_rK\hat{\zeta}_i(k)+H(\bar{\zeta}_i-\check{C}_r\hat{x}_i(k)),\\
				{\chi}_i(k+1)&=(\check{A}_r-\check{B}_rK)\chi_i(k)+\check{A}_r\hat{x}_i(k)-\check{A}_r\hat{\zeta}_i(k),\\
				u_i(k)&=C_{i,h}\xi_i(k)-D_{i,h}K\chi_i(k),
			\end{system}
		\end{equation} \\
		\bottomrule
	\end{tabular}
\end{table}

Then, we have the following theorem for scalable regulated output synchronization of heterogeneous MAS.

\begin{theorem}\label{thm_reg_out_syn}
	Consider a heterogeneous network of $N$ agents \eqref{hete_sys} satisfying Assumption \ref{ass2} with local information \eqref{local} and the associated exosystem \eqref{exo} satisfying Assumption \ref{ass-exo}. Then, the scalable delayed regulated output synchronization problem as defined in Problem \ref{prob_y} is solvable.  In particular, the dynamic protocol \eqref{pscp2final} solves the scalable delayed regulated output synchronization problem based on localized information exchange for any $N$ and any graph
	$\mathscr{G}\in\mathbb{G}^N_\mathscr{C}$. 
\end{theorem}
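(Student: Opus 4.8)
The plan is to reduce the heterogeneous problem to the homogeneous partial-state-coupling result of Theorem \ref{thm_p} by means of the two-module architecture of Protocol 3, and then to absorb the residual homogenization error as a vanishing perturbation of the closed-loop error dynamics.

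First I would invoke Lemma \ref{lem-exo} to replace the original exosystem \eqref{exo} by the remodeled one \eqref{exo-2}, whose triple $(\check{C}_r,\check{A}_r,\check{B}_r)$ is invertible, of uniform rank $n_q$ (with $n_q$ at least the maximal order of the infinite zeros of every $(C_i,A_i,B_i)$ and all observability indices of $(C_r,A_r)$), has no invariant zeros, and produces exactly the same output $y_r$. Since the eigenvalues of $\check{A}_r$ are those of $A_r$ (on the unit circle, by Assumption \ref{ass-exo}) together with additional zeros, $\check{A}_r$ meets the spectral condition of Assumption \ref{agentass}, while invertibility together with the absence of invariant zeros makes $(\check{A}_r,\check{B}_r)$ stabilizable and $(\check{C}_r,\check{A}_r)$ detectable; hence matrices $H,K$ rendering $\check{A}_r-H\check{C}_r$ and $\check{A}_r-\check{B}_rK$ Schur stable exist. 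Next, using Assumption \ref{ass2} and the pre-compensator construction of \cite[Appendix B]{yang-saberi-stoorvogel-grip-journal} (Step 2), each agent \eqref{hete_sys} together with \eqref{pre_con} becomes the compensated agent \eqref{sys_homo}, i.e.\ a copy of the target model $(\check{C}_r,\check{A}_r,\check{B}_r)$ driven by $v_i(k)+\rho_i(k)$, where $\rho_i$ obeys the Schur-stable recursion \eqref{sys-rho} and therefore decays to zero exponentially for every initial condition.

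With this reduction in hand, the network of compensated agents \eqref{sys_homo} under the collaborative protocol \eqref{pscp2} is precisely the homogeneous partial-state-coupling configuration of Theorem \ref{thm_p}, with $(A,B,C)$ replaced by $(\check{A}_r,\check{B}_r,\check{C}_r)$, except for the extra input term $\rho_i$. I would therefore repeat the delay transformation $\bar{x}^h_i(k)=x^h_i(k+{\kappa}_{ir})$, $\bar{\chi}_i(k)=\chi_i(k+{\kappa}_{ir})$, $\hat{\bar{x}}_i(k)=\hat{x}_i(k+{\kappa}_{ir})$ and the change of variables $\delta=\tilde{x}-\bar{\chi}$, $\bar{\delta}=((I-\bar{D}_{j\omega}(\kappa))\otimes I)\tilde{x}-\hat{\bar{x}}$ used in the proof of Theorem \ref{thm_p}, arriving at the same error system \eqref{error-partial}, now forced by a term proportional to $\rho_i$. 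The lower-triangular structure of $\bar{D}$ (with $|\bar{d}_{ii}|<1$) and the Schur stability of $\check{A}_r-\check{B}_rK$ and $\check{A}_r-H\check{C}_r$ make the unforced error system exponentially stable, verified agent by agent for $i=1,\dots,N$ exactly as in Theorem \ref{thm_p}, using \cite[Lemma 3]{zhang-saberi-stoorvogel-delay} and \cite[Lemma 6]{zhang-saberi-stoorvogel-continues-discrete} to pass from the delay-free matrix condition to the delayed determinant condition; the exponentially decaying forcing $\rho_i$ cannot destroy convergence, so $\tilde{x}(k)\to 0$, that is $x^h_i(k)-\check{x}_r(k-{\kappa}_{ir})\to 0$. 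Applying $\check{C}_r$ yields $y_i(k)-y_r(k-{\kappa}_{ir})\to 0$ for all $i$, which is \eqref{delayed-reg-sync-y}; since none of the construction used $N$, the particular graph in $\mathbb{G}^N$, or the delays ${\kappa}_{ij},\hat{\kappa}_{ij}$, the scalable problem of Problem \ref{prob_y} is solved.

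The main obstacle I expect is the rigorous handling of the residual term $\rho_i$: one must argue that superimposing a vanishing, graph- and delay-independent input on the already exponentially stable, lower-triangularly coupled delayed error dynamics still yields asymptotic regulation — a routine but careful superposition / input-to-state argument layered on top of the inductive agent-by-agent stability proof. A smaller point is to verify carefully that the remodeled triple $(\check{C}_r,\check{A}_r,\check{B}_r)$ indeed satisfies all the hypotheses (Assumption \ref{agentass}) under which the homogeneous analysis of Theorem \ref{thm_p} was carried out, so that the reduction is legitimate.
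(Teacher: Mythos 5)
Your proposal is correct and follows the paper's route: remodel the exosystem via Lemma \ref{lem-exo}, homogenize each agent with the pre-compensators of Step~2, apply the delay transformation and the changes of variables $\delta=\tilde{x}-\bar{\chi}$ and $\bar{\delta}=((I-\bar{D}_{j\omega}(\kappa))\otimes I)\tilde{x}-\hat{\bar{x}}$, and then run the same lower-triangular, agent-by-agent stability argument as in Theorem \ref{thm_p}. The one place where you diverge is the treatment of the homogenization residual $\rho_i$: you leave it as an exogenous, exponentially decaying forcing of the error system and plan an input-to-state/superposition argument on top of the frequency-domain analysis --- precisely the step you flag as your main obstacle. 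The paper avoids this entirely by keeping the generator \eqref{sys-rho} inside the closed loop: it augments the error state with $\bar{\omega}_i(k)=\omega_i(k+\kappa_{ir})$, so that \eqref{newsystem3} is an autonomous linear system in $(\tilde{x},\delta,\bar{\delta},\bar{\omega})$ which is block upper-triangular with the Schur-stable block $A_s=\diag\{A_{i,s}\}$ in the last position; the delay-free stability check and the determinant condition \eqref{cond-partial-h} then apply verbatim as in Theorem \ref{thm_p}, with no separate cascade argument. Your version can be made rigorous (a Schur-stable delayed linear error system driven by an exponentially decaying input still converges), but the state-augmentation route is what lets the paper reuse \cite[Lemma 3]{zhang-saberi-stoorvogel-delay} and \cite[Lemma 6]{zhang-saberi-stoorvogel-continues-discrete} directly, since those results address autonomous systems. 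Your side verifications --- that $\check{A}_r$ inherits eigenvalues in the closed unit disc and that invertibility with no invariant zeros yields stabilizability of $(\check{A}_r,\check{B}_r)$ and detectability of $(\check{C}_r,\check{A}_r)$, so the required $K$ and $H$ exist --- are correct and make explicit a step the paper leaves implicit in Protocol~3.
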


\begin{proof}[Proof of Theorem \ref{thm_reg_out_syn}]
	Similar to the proof of Theorem \ref{thm_p} and by defining $\bar{x}_i(k)=x_i^h(k+\kappa_{ir})$, $\bar{\rho}_i(k)=\rho_i(k+\kappa_{ir})$, $\bar{\omega}_i(k)=\omega_i(k+\kappa_{ir})$, $\tilde{x}_i(k)=\bar{x}_i(k)-\check{x}_r(k)$ and 
	\begin{equation*}
		\tilde{x}=\begin{pmatrix}
			\tilde{x}_1\\ \vdots\\ \tilde{x}_N
		\end{pmatrix},\hat{\bar{x}}=\begin{pmatrix}
			\hat{\bar{x}}_1\\ \vdots\\ \hat{\bar{x}}_N
		\end{pmatrix},\bar{\chi}=\begin{pmatrix}
			\bar{\chi}_1\\ \vdots\\ \bar{\chi}_N
		\end{pmatrix},\bar{\rho}=\begin{pmatrix}
			\bar{\rho}_1\\ \vdots\\ \bar{\rho}_N\end{pmatrix},\bar{\omega}=\begin{pmatrix}
			\bar{\omega}_1\\ \vdots\\\bar{\omega}_N\end{pmatrix}
	\end{equation*}
	then, we have the following closed-loop system in frequency domain
	\begin{equation}
		\begin{system}{cl}
			e^{j\omega}{\tilde{x}}&=(I\otimes \check{A}_r)\tilde{x}-(I\otimes \check{B}_rK)\bar{\chi}+(I\otimes \check{B}_r)\bar{\rho}\\
			e^{j\omega}{\hat{\bar{x}}}&=(I\otimes (\check{A}_r-H\check{C}_r))\hat{\bar{x}}-((I-\bar{D}_{j\omega}(\kappa))\otimes \check{B}_rK)\bar{\chi}+((I-\bar{D})\otimes H\check{C}_r)\tilde{x}\\
			e^{j\omega}{\bar{\chi}}&=(I\otimes(\check{A}_r-\check{B}_rK))\bar{\chi}-((I-\bar{D}_{j\omega}(\kappa))\otimes \check{A}_r)\bar{\chi}+(I\otimes \check{A}_r)\hat{\bar{x}}\\
			e^{j\omega}\omega&=A_s\omega
		\end{system}
	\end{equation}
	
	By defining $\delta=\tilde{x}-\bar{\chi}$ and $\bar{\delta}=((I-\bar{D}_{j\omega}(\kappa))\otimes I)\tilde{x}-\hat{\bar{x}}$, we obtain  
	\begin{equation}\label{newsystem3}
		\begin{system*}{cl}
			e^{j\omega}{\tilde{x}}&=(I\otimes (\check{A}_r-\check{B}_rK ))\tilde{x}+(I\otimes \check{B}_rK )\delta+(I\otimes \check{B}_r)C_s\bar{\omega}\\
			e^{j\omega}{\delta}&=(\bar{D}_{j\omega}(\kappa)\otimes \check{A}_r)\delta+(I\otimes \check{A}_r)\bar{\delta}+(I\otimes \check{B}_r)C_s\bar{\omega}\\
			e^{j\omega}{\bar{\delta}}&=(I\otimes(\check{A}_r-H\check{C}_r))\bar{\delta}-((\bar{D}_{j\omega}(\kappa)-\bar{D})\otimes H\check{C}_r)\tilde{x}+((I-\bar{D}_{j\omega}(\kappa))\otimes \check{B}_r)C_s\bar{\omega}\\e^{j\omega}\omega&=A_s\omega
		\end{system*}
	\end{equation}
	
	where $C_s=\diag\{C_{i,s}\}$ for $i=1,...,N$. Similar to the proof of Theorem \ref{thm_p}, we prove \eqref{newsystem3} is asymptotically stable for all communication
	delays ${\kappa}_{ij}\in \mathbb{R}^+$ and $\hat{\kappa}_{ij}\in \mathbb{R}^+$. Following the critical
	lemma \cite[Lemma 3]{zhang-saberi-stoorvogel-delay}, we first prove stability without
	communication delays ${\kappa}_{ij}$ and $\hat{\kappa}_{ij}$ and then prove stability in presence of communication delays.
	\begin{itemize}
		\item In the absence of communication delays in the network,
		the stability of system \eqref{newsystem3} is equivalent to the stability of matrix
		
		
		\begin{equation}
			\begin{pmatrix}
				I\otimes (\check{A}_r-\check{B}_rK)&I\otimes \check{B}_rK&0&(I\otimes \check{B}_r)C_s\\
				0&\bar{D}\otimes \check{A}_r&I\otimes \check{A}_r&(I\otimes \check{B}_r)C_s\\
				0&0&I\otimes(\check{A}_r-H\check{C}_r)&((I-\bar{D})\otimes \check{B}_r)C_s\\
				0&0&0&A_s
			\end{pmatrix}
		\end{equation}
		where $A_s=\diag\{A_{i,s}\}$ for $i=1,...,N$. Similar to the proof of Theorem \ref{thm_p}, we have that all eigenvalues of $\bar{D}\otimes \check{A}_r$ are inside the unit disc.
		Then, since we have that $\check{A}_r-\check{B}_rK$,  $\check{A}_r-H\check{C}_r$ and $A_s$ are Schur stable, we obtain that 
		\[
		\lim_{k\to\infty}\tilde{x}\to 0.
		\]
		It implies that $\bar{x}_i\to {x}_r$.
		
		\item  In the presence of communication delay, the closed-loop system \eqref{newsystem3} is asymptotically stable if
		
		\begin{equation}\label{cond-partial-h}
			\det\left[e^{j\omega} I-	\begin{pmatrix}
			I\otimes (\check{A}_r-\check{B}_rK)&I\otimes \check{B}_rK&0&(I\otimes \check{B}_r)C_s\\
			0&\bar{D}_{j\omega}(\kappa)\otimes \check{A}_r&I\otimes \check{A}_r&(I\otimes \check{B}_r)C_s\\
			(\bar{D}-\bar{D}_{j\omega}(\kappa))\otimes H\check{C}_r&0&I\otimes(\check{A}_r-H\check{C}_r)&((I-\bar{D}_{j\omega}(\kappa))\otimes \check{B}_r)C_s\\
			0&0&0&A_s
			\end{pmatrix}\right]\ne0
		\end{equation}
		
		for all $\omega \in \mathbb{R}$ and any communication delays ${\kappa}_{ij}\in \mathbb{R}^+$ and $\hat{\kappa}_{ij}\in \mathbb{R}^+$.
		Condition \eqref{cond-partial-h} is satisfied if matrix
		
		\begin{equation}
		\begin{pmatrix}
		I\otimes (\check{A}_r-\check{B}_rK)&I\otimes \check{B}_rK&0&(I\otimes \check{B}_r)C_s\\
		0&\bar{D}_{j\omega}(\kappa)\otimes \check{A}_r&I\otimes \check{A}_r&(I\otimes \check{B}_r)C_s\\
		(\bar{D}-\bar{D}_{j\omega}(\kappa))\otimes H\check{C}_r&0&I\otimes(\check{A}_r-H\check{C}_r)&((I-\bar{D}_{j\omega}(\kappa))\otimes \check{B}_r)C_s\\
		0&0&0&A_s
		\end{pmatrix}
		\end{equation}
		has no eigenvalues on the unit circle for all $\omega \in \mathbb{R}$. Then, according to the structure of  matrix $\bar{D}$, and similar to the proof of Theorem \ref{thm_p} one can obtain that $\tilde{x}$ is asymptotically stable, i.e., $\lim_{k\to\infty}\tilde{x}_i=0$, which implies that $\lim_{k\to\infty}\tilde{y}_i=0$, or $\bar{y}_i\to y_r$.
	\end{itemize}
\end{proof}

\begin{figure}[t]
	\centering
	\begin{minipage}{.5\textwidth}
		\centering
		\includegraphics[width=4.7cm,height=3cm]{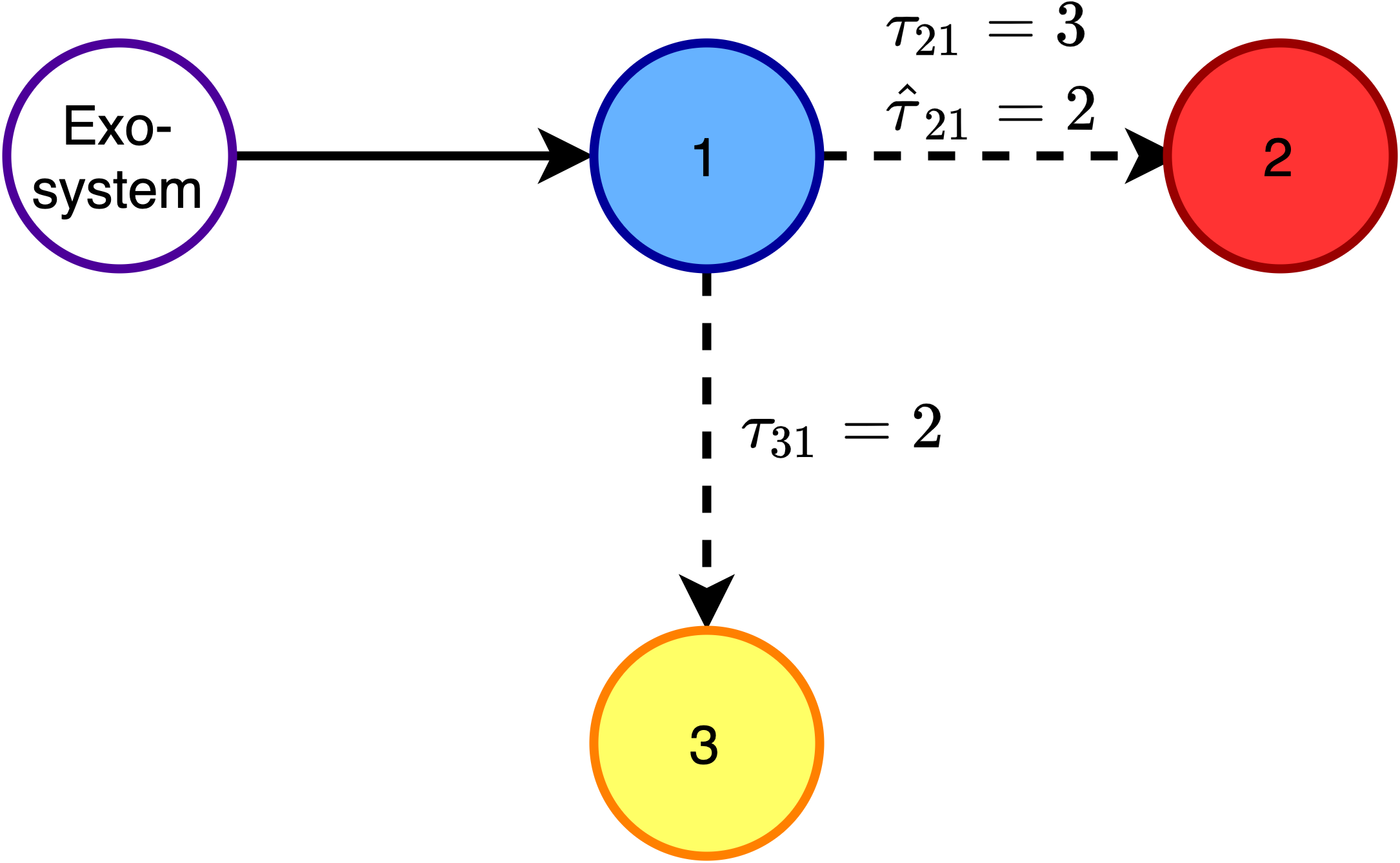}
		\caption{Communication graph of the network with $3$ nodes}\label{Tree-3nodes}
	\end{minipage}%
	\begin{minipage}{.5\textwidth}
		\centering
		\includegraphics[width=5.5cm,height=5.5cm]{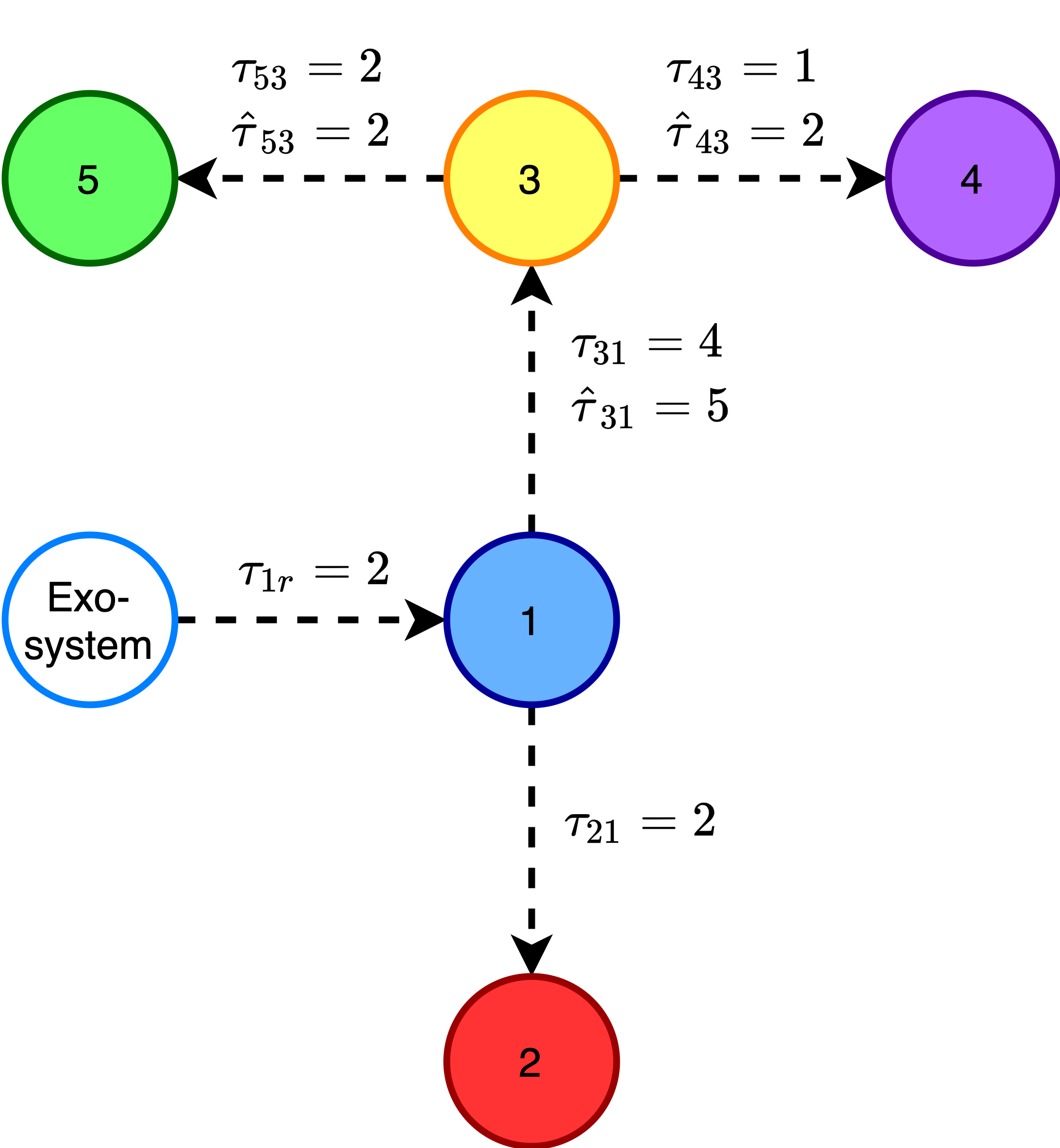}
		\caption{Communication graph of the network with $5$ nodes}\label{Tree-5nodes}
	\end{minipage}
\end{figure}
\begin{figure}[t!]
	\includegraphics[width=7cm, height=7cm]{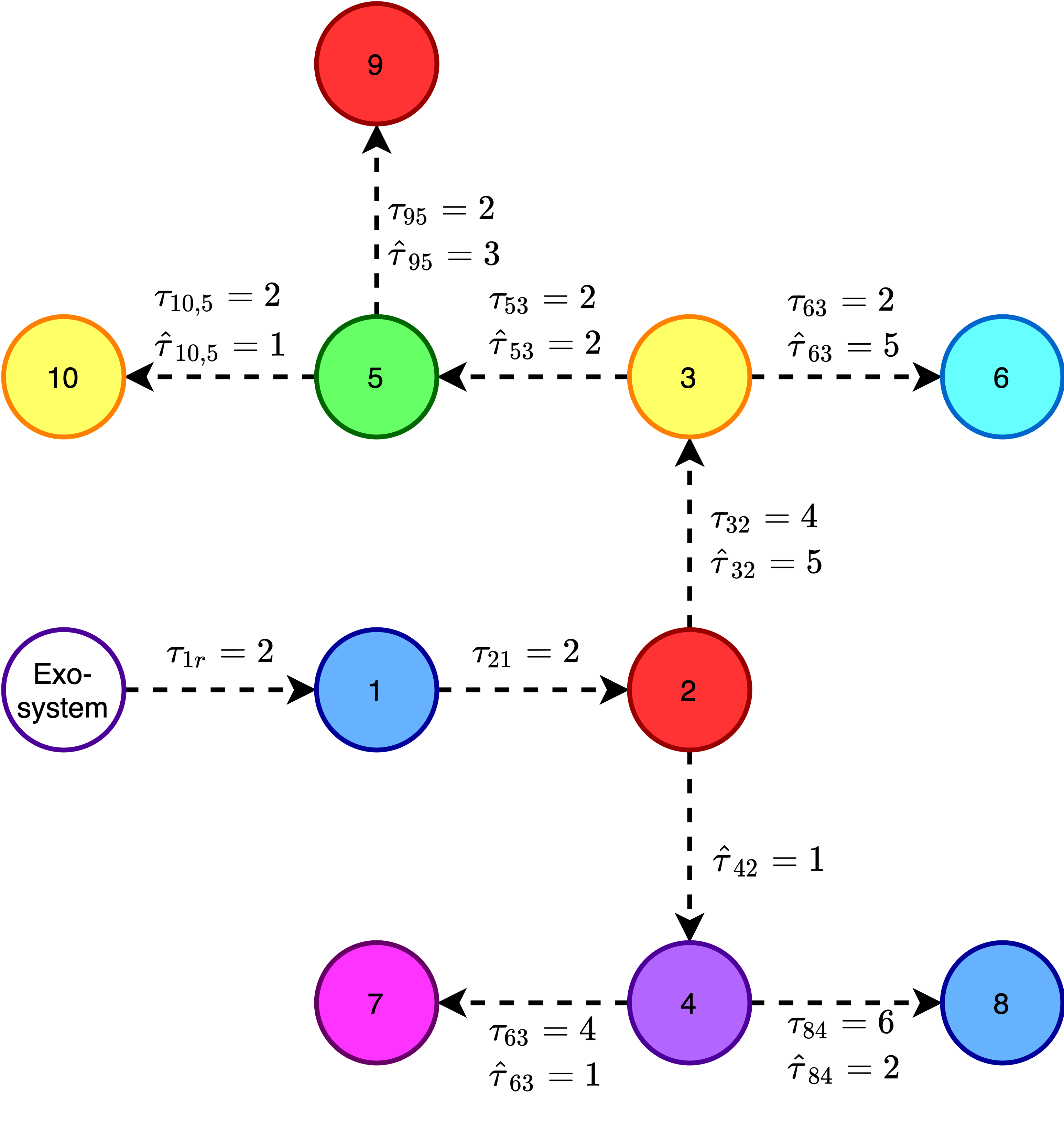}
	\centering
	\caption{Communication graph of the network with $10$ nodes}\label{Tree-10nodes}
\end{figure} 
\section{Numerical Example}
In this section, we will illustrate the performance of our scalable linear protocols with numerical examples for scale-free delayed regulated state synchronization of homogeneous MAS with partial-state coupling and scale-free delayed regulated output synchronization for heterogeneous MAS in presence of communication delays.

\subsection*{Example 1. Homogeneous MAS}
Consider agents models \eqref{agent} with matrices
\begin{equation*}
A=\begin{pmatrix}
0.5 &1 &1\\
0 &\sqrt{3}/2 &-0.5\\
0 &0.5 &\sqrt{3}/2
\end{pmatrix},\quad B=\begin{pmatrix}
1\\1\\0
\end{pmatrix}, \quad C=\begin{pmatrix}
1&0&1
\end{pmatrix}.\\
\end{equation*} 

The goal is delayed regulated state synchronization to a dynamic synchronized trajectory generated by 
\[
\begin{system}{cl}
\dot{x}_r&=\begin{pmatrix}
0.5 &1 &1\\
0 &\sqrt{3}/2 &-0.5\\
0 &0.5 &\sqrt{3}/2
\end{pmatrix}x_r\\
y_r&=\begin{pmatrix}
1&0&1
\end{pmatrix}x_r
\end{system}
\]
by choosing initial condition, $x_r(0)=\begin{pmatrix}0.3&0.1&0.1
\end{pmatrix}\T$. Meanwhile, to show the scalability of our protocols we choose three different MAS, with different communication networks and different number of agents. In all of the following cases we choose matrices $K=\begin{pmatrix}0.0695&1.7625&1.2051\end{pmatrix}$ and $H=\begin{pmatrix}1.4327&0.4143&0.6993\end{pmatrix}\T$.

\textit{Case I}: Consider a MAS consisting of $3$ agents with agent models $(A,B,C)$ and tree communication topology shown in Figure \ref{Tree-3nodes}. The communication delays are equal to $\kappa_{21}=3$, $\kappa_{31}=2$, and $\hat{\kappa}_{21}=2$. The exosystem provides $x_r(t)$ for agent $1$. Figure \ref{Hom-3-dis} shows the simulation results.

\textit{Case II}: Now, we consider another MAS consisting of $5$ agents with agent models $(A,B,C)$ and communication topology shown in Figure \ref{Tree-5nodes}. The communication delays are equal to $\kappa_{1r}=2$, $\kappa_{21}=2$, $\kappa_{31}=4$, $\kappa_{43}=1$, $\kappa_{53}=2$, $\hat{\kappa}_{31}=5$, $\hat{\kappa}_{43}=2$ and $\hat{\kappa}_{53}=2$. We show that with the same protocol utilized for \emph{case I}, we achieve delayed regulated state synchronization. The simulation results are shown in Figure \ref{Hom-5-dis}.

\textit{Case III}: Finally, consider a MAS consisting of $10$ agents with agent models $(A,B,C)$ and directed communication topology shown in Figure \ref{Tree-10nodes}. The communication delays are equal to $\kappa_{1r}=2$, $\kappa_{21}=2$, $\kappa_{32}=4$, $\kappa_{53}=2$, $\kappa_{63}=2$, $\kappa_{74}=4$, $\kappa_{84}=6$, $\kappa_{95}=2$, $\kappa_{10,5}=2$, $\hat{\kappa}_{32}=5$, $\hat{\kappa}_{42}=1$, $\hat{\kappa}_{53}=2$, $\hat{\kappa}_{63}=5$, $\hat{\kappa}_{74}=1$, $\hat{\kappa}_{84}=6$, $\hat{\kappa}_{95}=3$, and $\hat{\kappa}_{10,5}=1$. The exosystem provides $x_r$ for agent $1$. The simulation results for this MAS are presented in Figure \ref{Hom-10-dis}.

\begin{figure}[t]
	\includegraphics[width=13cm, height=9cm]{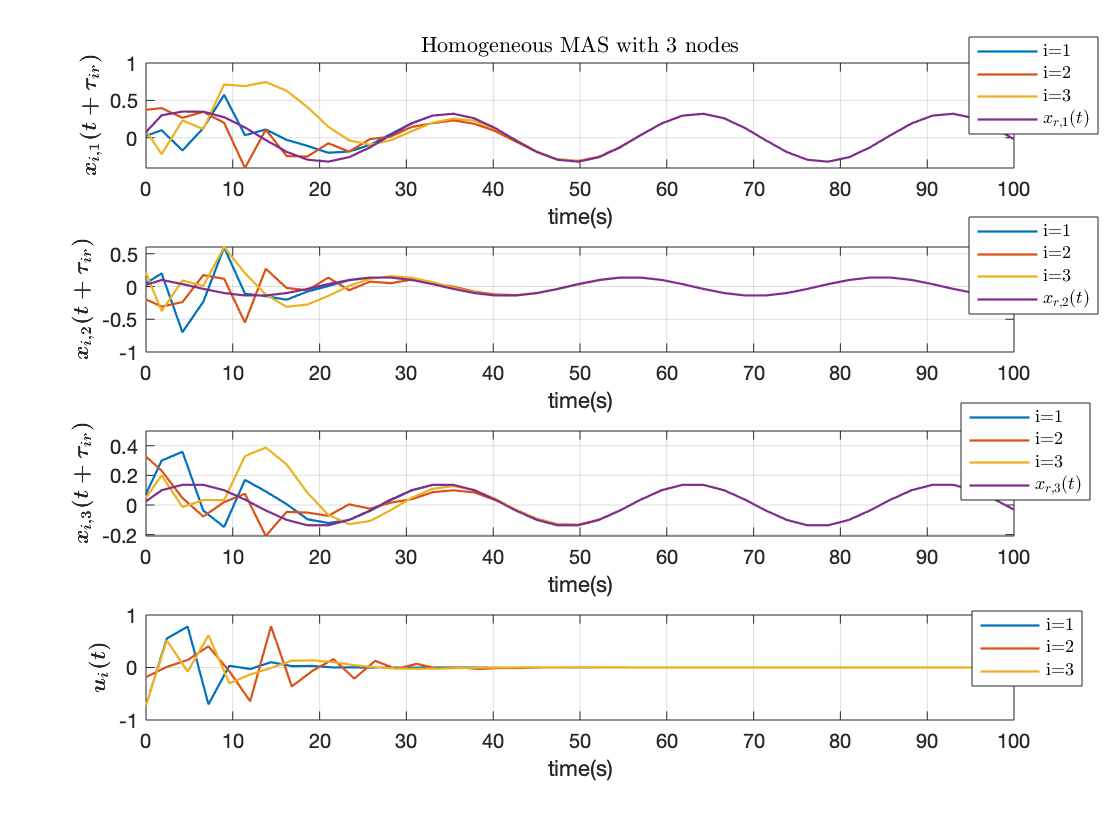}
	\centering
	\caption{Scale-free delayed regulated state synchronization for homogeneous MAS with $3$ nodes}\label{Hom-3-dis}
\end{figure} 
\begin{figure}[t!]
	\includegraphics[width=13cm, height=9cm]{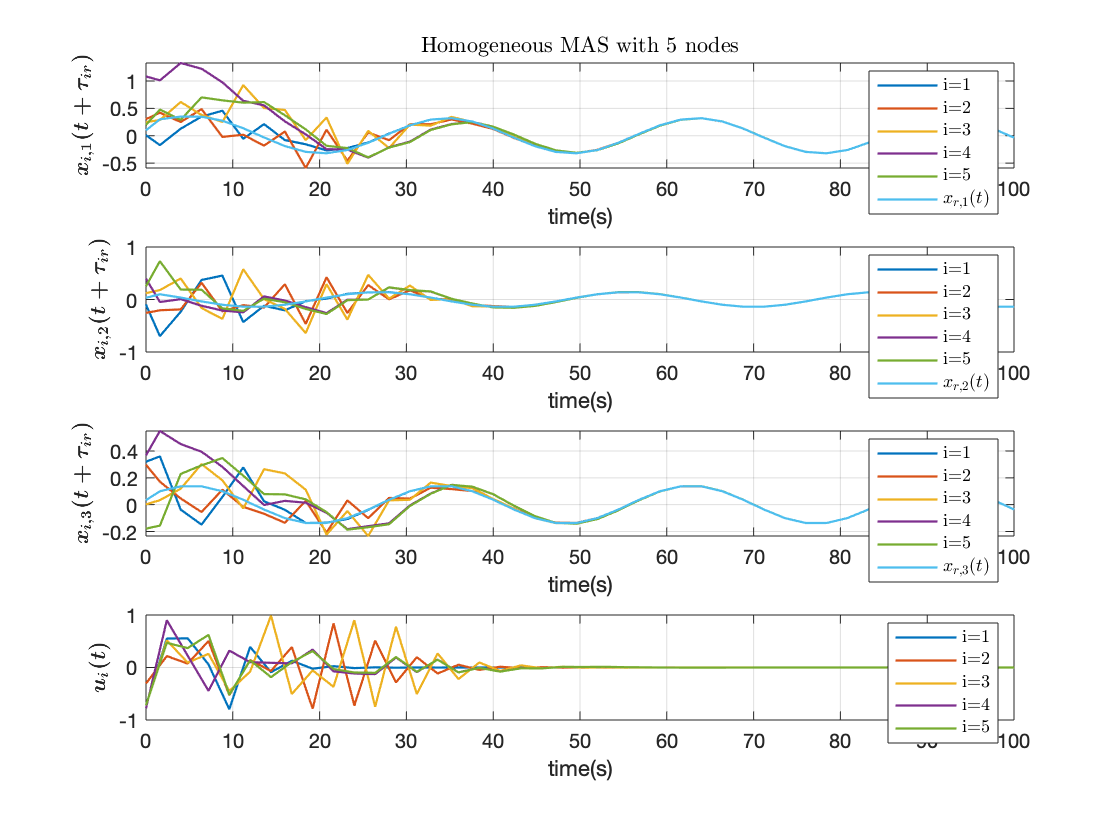}
	\centering
	\caption{Scale-free delayed regulated state synchronization for homogeneous MAS with $5$ nodes}\label{Hom-5-dis}
\end{figure} 
\begin{figure}[t!]
	\includegraphics[width=13cm, height=9cm]{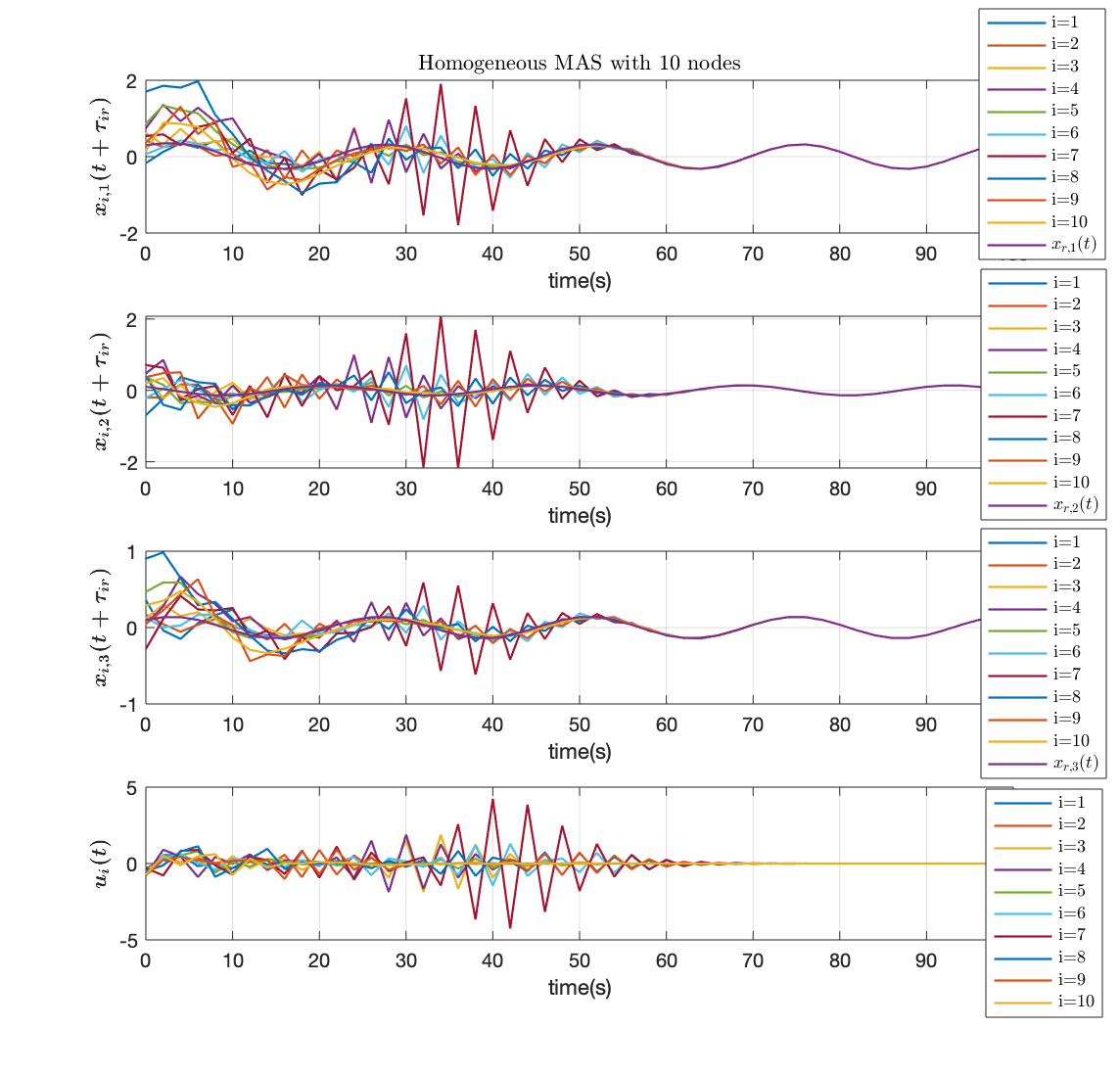}
	\centering
	\caption{Scale-free delayed regulated state synchronization for homogeneous MAS with $10$ nodes}\label{Hom-10-dis}
\end{figure} 
The simulation results show that our one-shot-design protocol \eqref{pscps} achieves delayed regulated state synchronization for any communication network with associated spanning tree graph and any size of the network. Moreover, the protocol can tolerate any unknown non-uniform and arbitrarily large communication delays.
\begin{figure}[t]
	\includegraphics[width=13cm, height=9cm]{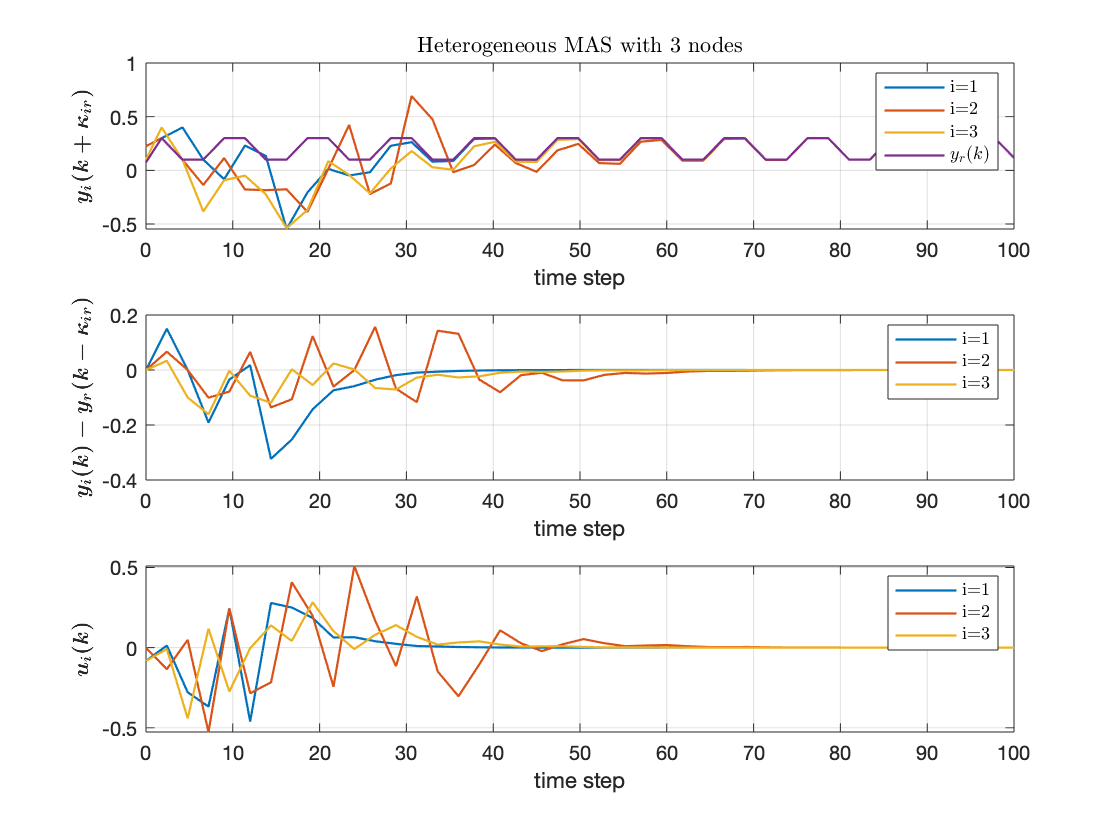}
	\centering
	\caption{Scale-free delayed regulated output synchronization for heterogeneous MAS with $3$ nodes}\label{Het-3-dis}
\end{figure} 
\begin{figure}[t!]
	\includegraphics[width=13cm, height=9cm]{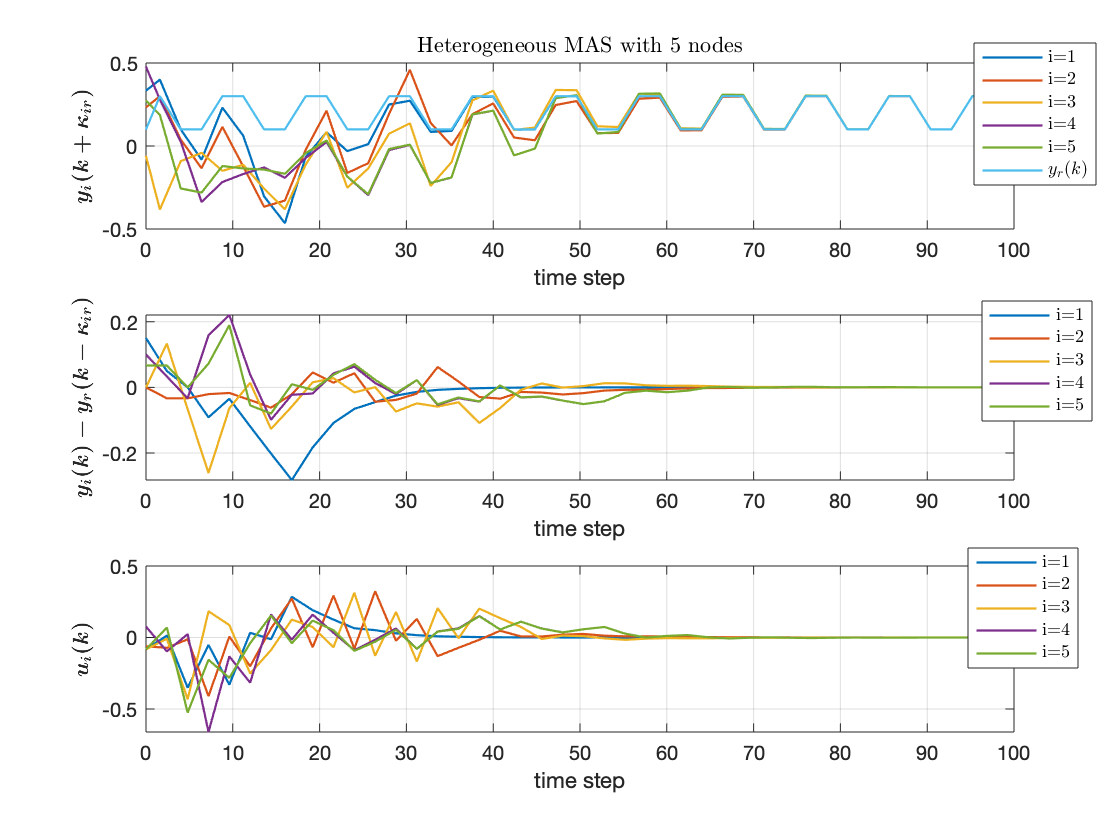}
	\centering
	\caption{Scale-free delayed regulated output synchronization for heterogeneous MAS with $5$ nodes}\label{Het-5-dis}
\end{figure} 
\begin{figure}[t!]
	\includegraphics[width=13cm, height=9cm]{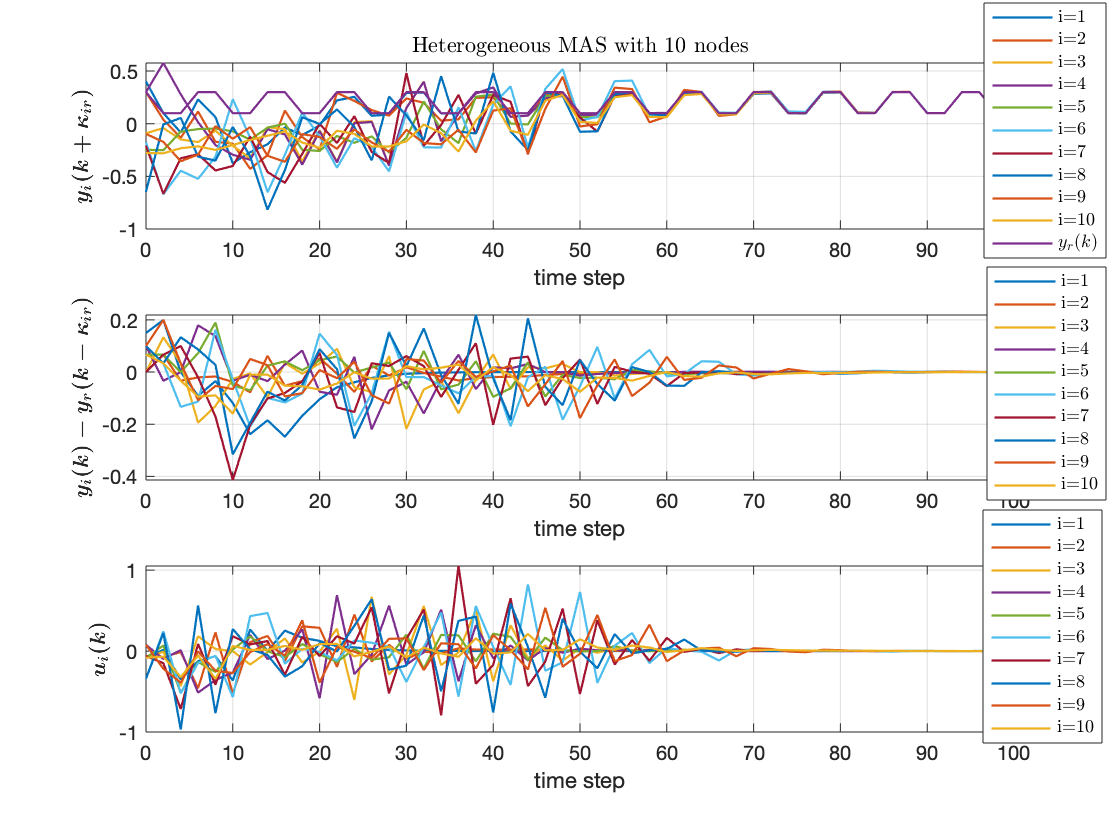}
	\centering
	\caption{Scale-free delayed regulated output synchronization for heterogeneous MAS with $10$ nodes}\label{Het-10-dis}
\end{figure} 
\subsection*{Example 2. Heterogeneous MAS}
In this section, we will illustrate the effectiveness of our protocols with a numerical example for delayed regulated output synchronization of heterogeneous discrete-time MAS. We show that our protocol design \emph{Protocol 3} is scale-free and it works for any graph $\mathcal{G}\in \mathbb{G}^N$ with any number of agents. Consider the agents models \eqref{hete_sys} with
\begin{equation*}
A_i=\begin{pmatrix}
0 &0 &1 &0\\0 &0 &0 &1\\0 &2 &1&1\\1 &1 &1 &0
\end{pmatrix}, B_i=\begin{pmatrix}
0&0\\0&0\\1&0\\0&1
\end{pmatrix}, C_i=\begin{pmatrix}
0&0&1&0
\end{pmatrix},
C^m_i=I
\end{equation*}
for $i=1,6$, and 
\begin{equation*}
A_i=\begin{pmatrix}
0&1&0\\0&0&1\\0&0&0
\end{pmatrix},B_i=\begin{pmatrix}
0\\0\\1
\end{pmatrix},C_i=\begin{pmatrix}
1&0&0
\end{pmatrix},
C^m_i=I,
\end{equation*}
for $i=2,7$, and
\begin{equation*}
A_i=\begin{pmatrix}
0 &0&0&1 &0\\ 0&1 &-1 &0 &1\\ 0 &1 &0 &0 &0\\ 0& 0 &1 &0 &0\\ 1 &1& 0 &0 &1
\end{pmatrix},B_i=\begin{pmatrix}
0&0\\1&0\\0&0\\0&0\\0&1
\end{pmatrix},C_i=\begin{pmatrix}
0&0&1&0&0
\end{pmatrix},
C^m_i=I,
\end{equation*}
for $i=3,4,8,9$, and
\begin{equation*}
A_i=\begin{pmatrix}
0&1&0\\0&0&1\\-2&1&0
\end{pmatrix},B_i=\begin{pmatrix}
0\\0\\1
\end{pmatrix},C_i=\begin{pmatrix}
1&0&0
\end{pmatrix},
C^m_i=I,
\end{equation*}
for $i=5,10$. Note that $\bar{n}_d=3$, which is the degree of infinite zeros of $(C_2,A_2,B_2)$. In this example, our goal is delayed output regulation to a non-constant signal generated by
\begin{equation*}
\begin{system*}{cl}
\dot{x}_r&=\begin{pmatrix}
0&1&0\\0&0&1\\1&-1&1
\end{pmatrix}x_r,\\
y_r&=\begin{pmatrix}
1&0&0
\end{pmatrix}x_r.
\end{system*}
\end{equation*}
Utilizing Lemma \ref{lem-exo}, we choose $(\check{C}_r, \check{A}_r, \check{B}_r)$ as
\begin{equation*} 
\check{A}_r=\begin{pmatrix}
0&1&0\\0&0&1\\1&-1&1
\end{pmatrix},\quad \check{B}_r=\begin{pmatrix}
0\\0\\1
\end{pmatrix}, \quad \check{C}_r=\begin{pmatrix}
1&0&0
\end{pmatrix}
\end{equation*}
and $K=\begin{pmatrix}
1.006 &  -0.99 &  0.6
\end{pmatrix}$ and $H=\begin{pmatrix}
    0.9&-0.35&-0.225
\end{pmatrix}\T$. To show the scalability of our protocols, similar to Example 1, we consider three heterogeneous MAS with different number of agents and different communication topologies.

\textit{Case I}: Consider a MAS with $3$ agents with agent models $(C_i, A_i, B_i)$ for $i \in \{1,\hdots,3\}$, and directed communication topology shown in Figure \ref{Tree-3nodes}. Values of communication delays are same as Example 1, case 1.

\textit{Case II}: In this case, we consider a MAS with $5$ agents and agent models $(C_i, A_i, B_i)$ for $i \in \{1,\hdots,5\}$ and directed communication topology shown in Figure \ref{Tree-5nodes}. Values of communication delays are same as Example 1, case 2.

\textit{Case III}: Finally, we consider a MAS with $10$ agents and agent models $(C_i, A_i, B_i)$ for $i \in \{1,\hdots,10\}$ and directed communication topology, shown in Figure \ref{Tree-10nodes}. Values of communication delays are same as Example 1, case 3.\\

The simulation results are shown in Figure \ref{Het-3-dis}-\ref{Het-10-dis}. We observe that our one-shot protocol design works for any MAS with any communication networks $\mathcal{G}\in \mathbb{G}^N$ and any number of agents $N$.

\bibliographystyle{plain}
\bibliography{referenc}

\end{document}